\documentclass[a4paper,english,fleqn]{llncs}
\makeatletter\let\endnote\@undefined\makeatother
\usepackage[utf8]{inputenc}
\usepackage[T1]{fontenc}
\usepackage{babel}
\useshorthands{"}
\defineshorthand{"=}{\babelhyphen{hard}}
\defineshorthand{"~}{\babelhyphen{nobreak}}
\usepackage{times}
\usepackage[scaled=.8]{helvet}
\usepackage{amsmath,amssymb,stmaryrd,mathrsfs,mathtools}
\usepackage{microtype}
\usepackage{xspace,calc,xifthen}
\usepackage[hidelinks]{hyperref}
\usepackage{csquotes}
\usepackage[capitalise,nameinlink]{cleveref}
\crefname{section}{Sect.}{Sects.}\Crefname{section}{Section}{Sections}
\crefname{figure}{Fig.}{Figs.}\Crefname{figure}{Figure}{Figures}
\crefname{table}{Tab.}{Tabs.}\Crefname{table}{Table}{Tables}
\crefname{appendix}{App.}{Apps.}\Crefname{appendix}{Appendix}{Appendices}
\crefformat{equation}{(#2#1#3)}\crefrangeformat{equation}{(#3#1#4--#5#2#6)}
\crefname{enumi}{}{}\Crefname{enumi}{}{}
\crefformat{enumi}{(#2#1#3)}\crefrangeformat{enumi}{(#3#1#4--#5#2#6)}
\crefname{definition}{Def.}{Defs.}\Crefname{definition}{Definition}{Definitions}
\crefname{lemma}{Lem.}{Lems.}\Crefname{lemma}{Lemma}{Lemmata}
\crefname{proposition}{Prop.}{Props.}\Crefname{proposition}{Proposition}{Propositions}
\crefname{theorem}{Thm.}{Thms.}\Crefname{theorem}{Theorem}{Theorems}
\crefname{corollary}{Cor.}{Cors.}\Crefname{corollary}{Corollary}{Corollaries}
\crefname{example}{Ex.}{Exs.}\Crefname{example}{Example}{Examples}
\crefname{algorithm}{Alg.}{Algs.}\Crefname{algorithm}{Algorithm}{Algorithms}
\usepackage{enumitem}
\usepackage[font={footnotesize}]{subfig}
\captionsetup[figure]{font={footnotesize},labelfont={bf},labelsep=period,captionskip=0pt,nearskip=0pt}
\captionsetup[subfloat]{font={footnotesize},labelfont={normalfont},labelsep=space,captionskip=6pt,farskip=8pt}
\usepackage{graphics,graphicx}
\usepackage{tikz}
\usetikzlibrary{automata,decorations.pathmorphing,matrix,shapes,arrows,calc}
\usepackage{xstring,xifthen}
\usepackage{algorithm}
\usepackage[noend]{algpseudocode}
\algnewcommand\algorithmicrequirx{\phantom{\textbf{Require:}}}
\algnewcommand\Requirx{\item[\algorithmicrequirx]}
\algnewcommand\algorithmicbreak{\textbf{break}}
\algnewcommand\algorithmicchoose{\textbf{choose}}
\algnewcommand\Break{\algorithmicbreak{}}
\algnewcommand\Choose{\algorithmicchoose{}}
\algrenewcommand\alglinenumber[1]{\tiny#1}
\def\Statex{\item[]\vspace*{-2.4ex}}

\usepackage{pdfcomment}
\marginparsep=8pt
\marginparwidth=1cm
\newif\ifshowednotes\showednotestrue
\newcommand*{\ednoteauthor}{EdNote}
\newcommand*{\ednotecomment}{No comment.}
\newcommand*{\myenotezwritemark}[1]{\leavevmode\marginpar{\pdftooltip{\footnotesize\ednoteauthor(#1)}{\ednotecomment}}}
\usepackage[backref=true]{enotez}
\setenotez{list-name={EdNotes}}
\setenotez{mark-cs={\myenotezwritemark}}
\newcommand{\ednote}[2][Ednote]{%
  \ifshowednotes%
    \renewcommand*{\ednoteauthor}{#1}%
    \renewcommand*{\ednotecomment}{#2}%
    \bgroup%
      \fontsize{6pt}{6pt}\selectfont%
      \endnote{\ifthenelse{\equal{#1}{Ednote}}{#2}{#1: #2}}%
    \egroup%
  \fi%
}
\ifshowednotes%
  \AtEndDocument{\printendnotes}%
\fi
\usepackage[%
  final, 
]{showkeys}

\newcommand*{\restrict}[2]{#1\mathnormal{\upharpoonright}#2}
\newcommand*{\limp}{\mathbin{\rightarrow}}

\newcommand*{\trueval}{\mathit{tt}}
\newcommand*{\falseval}{\mathit{ff}}
\newcommand*{\truefrm}{\mathrm{true}}
\newcommand*{\falsefrm}{\mathrm{false}}
\newcommand*{\refinesto}{\mathrel{\rightsquigarrow}}
\newcommand*{\id}[1]{\mathrm{id}_{#1}}

\newcommand*{\hybind}{\mathnormal{\downarrow}}
\newcommand*{\hyat}{\mathnormal{@}}

\newcommand*{\dldia}[1]{\langle#1\rangle}
\newcommand*{\dlbox}[1]{[#1]}

\newcommand*{\nfatslash}{%
  \mathchoice%
    {\mathbin{{\mathnormal{\mkern-8mu\fatslash\mkern0mu}}}}%
    {\mathbin{{\mathnormal{\mkern-8mu\fatslash\mkern0mu}}}}%
    {\mathbin{{\mathnormal{\mkern-5mu\fatslash\mkern2mu}}}}%
    {\mathbin{{\mathnormal{\mkern-4mu\fatslash\mkern1mu}}}}%
}

\newcommand*{\Attr}{A}
\newcommand*{\Evt}{E}
\newcommand*{\Data}{\mathcal{D}}
\newcommand*{\DataSt}{\Omega}
\newcommand*{\datast}{\omega}
\newcommand*{\datamodels}[1]{\models^{\Data}_{#1}}
\newcommand*{\Lab}{\Lambda}
\newcommand*{\lab}[2]{#1\nfatslash#2}
\newcommand*{\stFrm}{\Phi}
\newcommand*{\trFrm}{\Psi}
\newcommand*{\CtrlSt}{C}
\newcommand*{\ctrlst}{c}
\newcommand*{\initpred}{\varphi_0}
\newcommand*{\Rel}{R}
\newcommand*{\Conf}{\Gamma}
\newcommand*{\Trans}{T}
\newcommand*{\Ecomp}{\Theta}
\newcommand*{\bisim}{\sim}

\newcommand*{\Spec}{\mathit{Sp}}
\newcommand*{\Sig}{\Sigma}
\newcommand*{\Ax}{\mathit{Ax}}
\newcommand*{\Mod}{\mathrm{Mod}}
\newcommand*{\reductop}{\mathnormal{|}}
\newcommand*{\reduct}[2]{#1\reductop#2}

\newcommand{\DHL}{\ensuremath{\mathcal{D}^{\downarrow}}\xspace}
\newcommand{\EDHL}{\ensuremath{\mathcal{E}^{\downarrow}}\xspace}

\newcommand*{\EDHLInst}{\EDHL}
\newcommand*{\EDHLSig}{\mathit{Sig}^{\EDHLInst}}
\newcommand*{\EDHLFrm}{\mathrm{Frm}^{\EDHLInst}}
\newcommand*{\EDHLSen}{\mathrm{Sen}^{\EDHLInst}}
\newcommand*{\EDHLStr}{\mathit{Edts}^{\EDHLInst}}
\newcommand*{\EDHLmodels}[1]{\models^{\EDHLInst}_{#1}}

\useshorthands{"}
\defineshorthand{"=}{-\allowhyphens}
\allowdisplaybreaks

\pagestyle{plain}

\begin{document}

\title{A Hybrid Dynamic Logic for\\ Event/Data-based Systems}
\author{%
  Rolf Hennicker\inst{1}
  \and
  Alexandre Madeira\inst{2}\thanks{
    Supported by ERDF through COMPETE 2020 and by National Funds through
    FCT with POCI-01-0145-FEDER-016692 and UID/MAT/04106/2019, in a
    contract foreseen in no.s 4--6 of art.~23 of the DL 57/2016, changed
    by DL 57/2017.}
    \and
  Alexander Knapp\inst{3}
}
\institute{
  Ludwig-Maximilians-Universität München, Germany\\
  \email{hennicke@pst.ifi.lmu.de}\\[.5ex]
\and
    CIDMA, U.\,Aveiro, Portugal \& QuantaLab, U.\,Minho\\
  \email{madeira@ua.pt}\\[.5ex]
\and 
  Universität Augsburg, Germany\\
  \email{knapp@informatik.uni-augsburg.de}
}

\maketitle

\begin{abstract}
We propose $\EDHL$"=logic as a formal foundation for the specification
and development of event-based systems with local data states.  The
logic is intended to cover a broad range of abstraction levels from
abstract requirements specifications up to constructive specifications.
Our logic uses diamond and box modalities over structured actions
adopted from dynamic logic.  Atomic actions are pairs $\lab{e}{\psi}$
where $e$ is an event and $\psi$ a state transition predicate capturing
the allowed reactions to the event.  To write concrete specifications of
recursive process structures we integrate (control) state variables and
binders of hybrid logic.  The semantic interpretation relies on
event/data transition systems; specification refinement is defined by
model class inclusion.  For the presentation of constructive
specifications we propose operational event/data specifications allowing
for familiar, diagrammatic representations by state transition graphs.
We show that $\EDHL$"=logic is powerful enough to characterise the
semantics of an operational specification by a single $\EDHL$"=sentence.
Thus the whole development process can rely on $\EDHL$"=logic and its
semantics as a common basis.  This includes also a variety of
implementation constructors to support, among others, event refinement
and parallel composition.
\end{abstract}


\section{Introduction}

Event-based systems are an important kind of software systems which are
open to the environment to react to certain events. A crucial
characteristics of such systems is that not any event can (or should) be
expected at any time. Hence the control flow of the system is
significant and should be modelled by appropriate means. On the other
hand components administrate data which may change upon the occurrence
of an event.  Thus also the specification of admissible data changes
caused by events plays a major role.

There is quite a lot of literature on modelling and specification of
event-based systems.  Many approaches, often underpinned by graphical
notations, provide formalisms aiming at being constructive enough to
suggest particular designs or implementations, like e.g.,
Event-B~\cite{abrial:2013,farrell-monahan-power:wadt:2016}, symbolic
transition systems~\cite{poizat-royer:jucs:2006},
and UML behavioural and protocol state
machines~\cite{uml-2.5,knapp-et-al:fase:2015}.  On the other hand, there
are logical formalisms to express desired properties of event-based
systems. Among them are temporal logics integrating state and
event-based styles~\cite{ter-beek-et-al:fmics:2007}, and various kinds
of modal logics involving data, like first-order dynamic
logic~\cite{harel-kozen-tiuryn:2000} or the modal $\mu$-calculus with
data and time~\cite{groote-mousavi:2014}.  The gap between logics and
constructive specification is usually filled by checking whether
\emph{the} model of a constructive specification satisfies certain
logical formulae.

In this paper we are interested in investigating a logic which is
capable to express properties of event/data-based systems on various
abstraction levels in a common formalism.  For this purpose we follow
ideas of~\cite{madeira-et-al:tcs:2018}, but there data states, effects
of events on them and constructive operational specifications (see
below) were not considered.  The advantage of an expressive logic is
that we can split the transition from system requirements to system
implementation into a series of gradual refinement steps which are more
easy to understand, to verify, and to adjust when certain aspects of the
system are to be changed or when a product line
of similar products has to be developed.

\enlargethispage{.8618pt}
To that end we propose $\EDHL$-logic, a dynamic logic enriched with
features of hybrid logic.  The dynamic part uses diamond and box
modalities over 
structured actions. Atomic actions are of
the form $\lab{e}{\psi}$ with $e$ an event and $\psi$ a state transition
predicate specifying the admissible effects of $e$ on the data. Using
sequential composition, union, and iteration we obtain
complex actions that, in connection with the modalities, can be used to
specify required and forbidden behaviour. In particular, if $E$ is a
finite set of events, though data is infinite we are able to capture all
reachable states of the system and to express safety and liveness
properties.  But $\EDHL$"=logic is also powerful enough to specify
concrete, recursive process structures by integrating state variables
and binders from hybrid logic~\cite{braeuner:2010}
with the subtle difference that our state variables are used to denote control states only.
We show that the
dynamic part of the logic is bisimulation invariant while the hybrid
part, due to the ability to bind names to states, is not.

An axiomatic specification $\Spec = (\Sigma, \Ax)$ in $\EDHL$ is given
by an event/data signature $\Sigma = (E, A)$, with a set $E$ of
events and a set $A$ of attributes to model local data states, and a
set of $\EDHL$"=sentences $\Ax$, called axioms, expressing requirements.
For the semantic interpretation we use event/data transition systems
(edts).  Their states are reachable configurations $\gamma = (c,
\omega)$ where $c$ is a control state, recording the current
state of execution, and $\omega$ is a local data state, i.e., a
valuation of the attributes.  Transitions between configurations are
labelled by events.  The semantics of a specification $\Spec$ is
``loose'' in the sense that it consists of \emph{all} edts satisfying
the axioms of the specification. Such structures are called models of
$\Spec$.  Loose semantics allows us to define a simple refinement
notion: $\Spec_1$ refines to $\Spec_2$ if the model class of $\Spec_2$
is included in the model class of $\Spec_1$. We may also say that
$\Spec_2$ is an implementation of $\Spec_1$.

Our refinement process starts typically with axiomatic specifications
whose axioms involve only the dynamic part of the logic. Hybrid features
will successively be added in refinements when specifying more concrete
behaviours, like loops.  Aiming at a concrete design, the use of an
axiomatic specification style may, however, become cumbersome since we
have to state explicitly also all negative cases, what the system should
not do.  For a convenient presentation of constructive specifications we
propose operational event/data specifications, which are a kind of
symbolic transition systems equipped again with a model class semantics
in terms of edts.
We will show that $\EDHL$"=logic, by use of the hybrid
binder, is powerful enough to characterise the semantics of an
operational specification. 
Therefore we have not really left $\EDHL$"=logic when refining
axiomatic by operational specifications.
Moreover, since several
constructive notations in the literature, including (essential parts of)
Event-B, symbolic transition systems, and UML protocol state machines,
can be expressed as operational specifications, $\EDHL$"=logic provides a
logical umbrella under which event/data-based systems can be developed.

In order to consider more complex refinements we take up an idea of
Sannella and
Tarlecki~\cite{DBLP:journals/acta/SannellaT88,sannella-tarlecki:2012}
who have proposed the notion of constructor implementation.  This is a
generic notion applicable to specification formalisms based on
signatures and semantic structures for signatures.  As both are
available in the context of $\EDHL$"=logic, we complement our approach
by introducing a couple of constructors, among them event refinement and
parallel composition.  For the latter we provide a useful refinement
criterion relying on a relationship between syntactic and semantic
parallel composition.
The logic and the use of the
implementation constructors will be illustrated by a running
example.

Hereafter, in
\cref{sec:logic}, we introduce syntax and semantics of $\EDHL$"=logic.
In \cref{sec:spec}, we consider axiomatic as well as operational
specifications and demonstrate the expressiveness of $\EDHL$-logic. 
Refinement of both types of specifications using several implementation
constructors is considered in \cref{sec:constructor-impl}.
\Cref{sec:conclusions} provides some concluding remarks.
Proofs of theorems and facts can be found
in \cref{app:proofs}.



\section{A Hybrid Dynamic Logic for Event/Data Systems}\label{sec:logic}

We propose the logic $\EDHL$ to specify and reason about
event/data-based systems.  $\EDHL$-logic is an extension of the hybrid
dynamic logic considered in~\cite{madeira-et-al:tcs:2018} by taking into
account changing data.  Therefore, we first summarise our underlying
notions used for the treatment of data.  We then introduce the syntax
and semantics of $\EDHL$ with its hybrid and dynamic logic features
applied to events and data.

\subsection{Data States}\label{sec:data-states}

We assume given a universe $\Data$ of \emph{data values}.  A \emph{data
  signature} is given by a set $A$ of \emph{attributes}.  An
$A$-\emph{data state} $\omega$ is a function $\omega : A \to \Data$. We
denote by $\DataSt(A)$ the set of all $A$-data states.  For any data
signature $A$, we assume given a set $\stFrm(A)$ of \emph{state
  predicates} to be interpreted over single $A$-data states, and a set
$\trFrm(A)$ of \emph{transition predicates} to be interpreted over pairs
of pre"= and post"=$A$"=data states.  The concrete syntax of state and
transition predicates is of no particular importance for the following.
For an attribute $a \in A$, a state predicate may be $a > 0$; and a
transition predicate e.g.\ $a' = a + 1$, where $a$ refers to the value
of attribute $a$ in the pre"=data state and $a'$ to its value in the
post"=data state.  Still, both types of predicates are assumed to
contain $\mathrm{true}$ and to be closed under negation (written $\neg$)
and disjunction (written $\lor$); as usual, we will then also use
$\falsefrm$, $\land$, etc.
Furthermore, we assume for each $A_0 \subseteq A$ a
transition predicate $\id{A_0} \in \trFrm(A)$ expressing that the values
of attributes in $A_0$ are the same in pre- and post-$A$-data states.

We write $\omega \datamodels{A} \varphi$ if $\varphi \in
\stFrm(A)$ is satisfied in data state $\omega$; and $(\omega_1,
\omega_2) \datamodels{A} \psi$ if $\psi \in \trFrm(A)$ is
satisfied in the pre"=data state $\omega_1$ and post"=data state
$\omega_2$.  In particular, $(\omega_1, \omega_2) \datamodels{A}
\id{A_0}$ if, and only if, $\omega_1(a_0) = \omega_2(a_0)$ for all $a_0
\in A_0$.

\subsection{$\EDHL$-Logic}\label{sec:edhl}

\begin{definition}
An \emph{event/data signature} (\emph{ed signature}, for short) $\Sigma
= (E, A)$ consists of a finite set of \emph{events} $E$ and a
data signature $A$.  We write $\Evt(\Sigma)$ for $E$ and
$\Attr(\Sigma)$ for $A$.  We also write $\DataSt(\Sigma)$ for
$\DataSt(\Attr(\Sigma))$, $\stFrm(\Sigma)$ for $\stFrm(\Attr(\Sigma))$,
and $\trFrm(\Sigma)$ for $\trFrm(\Attr(\Sigma))$.  The class of ed
signatures is denoted by $\EDHLSig$.
\end{definition}

Any ed signature $\Sigma$ determines a class of semantic structures, the
\emph{event/data transition systems} which are reachable transition
systems with sets of initial states and events as labels on transitions. The
states are pairs $\gamma = (c, \omega)$, called \emph{configurations},
where $c$ is a \emph{control state} recording the current execution
state and $\omega$ is an $\Attr(\Sigma)$"=data state; we write $\ctrlst(\gamma)$
for $c$ and $\datast(\gamma)$ for $\omega$.

\begin{definition}
A $\Sigma$"=\emph{event/data transition system} ($\Sigma$"=\emph{edts},
for short) $M = (\Gamma, R, \Gamma_0)$ over an ed signature $\Sigma$
consists of a set of \emph{configurations} $\Gamma \subseteq C \times
\DataSt(\Sigma)$ for a set of \emph{control states} $C$; a family of
\emph{transition relations} $R = (R_e \subseteq \Gamma \times\allowbreak
\Gamma)_{e \in \Evt(\Sigma)}$; and a non-empty set of \emph{initial
  configurations} $\Gamma_0 \subseteq \{ c_0 \} \times \Omega_0$ for an
\emph{initial control state} $c_0 \in C$ and a set of \emph{initial data
  states} $\Omega_0 \subseteq \DataSt(\Sigma)$ such that $\Gamma$ is
\emph{reachable} via $R$, i.e., for all $\gamma \in \Gamma$ there are
$\gamma_0 \in \Gamma_0$, $n \geq 0$, $e_1, \ldots, e_n \in E(\Sigma)$,
and $(\gamma_i, \gamma_{i+1}) \in R_{e_{i+1}}$ for all $0 \leq i < n$
with $\gamma_n = \gamma$.  We write $\Conf(M)$ for $\Gamma$, $C(M)$ for
$C$, $\Rel(M)$ for $R$, $c_0(M)$ for $c_0$, $\DataSt_0(M)$ for
$\Omega_0$, and $\Conf_0(M)$ for $\Gamma_0$.  The class of $\Sigma$-edts
is denoted by $\EDHLStr(\Sigma)$.
\end{definition}

Atomic actions are given by expressions of the form $\lab{e}{\psi}$ with
$e$ an event and $\psi$ a state transition predicate.  The intuition is
that the occurrence of the event $e$ causes a state transition in
accordance with $\psi$, i.e., the pre"= and post"=data states satisfy
$\psi$, and $\psi$ specifies the possible effects of $e$.  Following the
ideas of dynamic logic we also use complex, structured actions formed
over atomic actions by union, sequential composition and iteration.  All
kinds of actions over an ed signature $\Sigma$ are called
$\Sigma$"=\emph{event/data actions} ($\Sigma$"=\emph{ed actions}, for
short). The set $\Lab(\Sigma)$ of $\Sigma$"=ed actions is defined by the
grammar
\begin{equation*}
  \lambda ::= \lab{e}{\psi} \,\mid\, \lambda_1 + \lambda_2 \,\mid\, \lambda_1 ; \lambda_2 \,\mid\, \lambda^*
\end{equation*}
where $e \in \Evt(\Sigma)$ and $\psi \in \trFrm(\Sigma)$.  We use the
following shorthand notations for actions: For a subset $F = \{ e_1,
\ldots, e_k \} \subseteq \Evt(\Sigma)$, we use the notation $F$ to
denote the complex action $\lab{e_1}{\mathrm{true}} + \ldots +
\lab{e_k}{\mathrm{true}}$ and $-F$ to denote the action $E(\Sigma)
\setminus F$.  For the action $E(\Sigma)$ we will write
$\boldsymbol{E}$.  For $e \in E(\Sigma)$, we use the notation $e$ to
denote the action $\lab{e}{\mathrm{true}}$ and $-e$ to denote the action
$\boldsymbol{E} \setminus \{ e \}$.  Hence, if $E(\Sigma) = \{ e_1,
\ldots, e_n\}$ and $e_i \in E(\Sigma)$, the action $-e_i$ stands for
$\lab{e_1}{\mathrm{true}} + \ldots + \lab{e_{i-1}}{\mathrm{true}} +
\lab{e_{i+1}}{\mathrm{true}} + \ldots+\lab{e_n}{\mathrm{true}}$.

The actions $\Lab(\Sigma)$ are \emph{interpreted} over a $\Sigma$"=edts
$M$ as the family of relations $(\Rel(M)_{\lambda} \subseteq \Conf(M)
\times \Conf(M))_{\lambda \in \Lab(\Sigma)}$ defined by
\begin{itemize}[leftmargin=*, topsep=3pt, itemsep=1pt]
  \item $\Rel(M)_{\lab{e}{\psi}} = \{ (\gamma, \gamma') \in \Rel(M)_e \mid
(\datast(\gamma), \datast(\gamma')) \datamodels{A(\Sigma)} \psi \}$,
  \item $\Rel(M)_{\lambda_1 + \lambda_2} = \Rel(M)_{\lambda_1} \cup
\Rel(M)_{\lambda_2}$, i.e., union of relations,
  \item $\Rel(M)_{\lambda_1 ; \lambda_2} = \Rel(M)_{\lambda_1} ;
\Rel(M)_{\lambda_2}$, i.e., sequential composition of relations,
  \item $\Rel(M)_{\lambda^*} = (\Rel(M)_{\lambda})^*$, i.e.,
reflexive-transitive closure of relations.
\end{itemize}
\vskip4pt

To define the event/data formulae of $\EDHL$ we assume given a countably
infinite set $X$ of control state variables which are used in formulae
to denote the control part of a configuration.  They can be bound by the
binder operator $\hybind x$ and accessed by the jump operator $\hyat x$
of hybrid logic.  The dynamic part of our logic is due to the modalities
which can be formed over any ed action over a given ed signature.
$\EDHL$ thus retains from hybrid logic the use of binders, but omits
free nominals.  Thus sentences of the logic become restricted to
express properties of configurations reachable from the initial ones.

\begin{definition}
The set $\EDHLFrm(\Sigma)$ of \emph{$\Sigma$"=ed formulae} over an ed
signature $\Sigma$ is given by
\begin{equation*}
  \varrho ::= \varphi \,\mid\, x \,\mid\, \hybind x \,.\, \varrho \,\mid\, \hyat x \,.\, \varrho \,\mid\, \dldia{\lambda} \varrho \,\mid\, \truefrm \,\mid\, \neg \varrho \,\mid\, \varrho_1 \lor \varrho_2
\end{equation*}
where $\varphi \in \stFrm(\Sigma)$, $x \in X$, and $\lambda \in
\Lab(\Sigma)$.  We write $\dlbox{\lambda} \varrho$ for
$\neg\dldia{\lambda} \neg\varrho$ and we use the usual boolean
connectives as well as the constant $\falsefrm$ to denote
$\neg\truefrm$.\emph{\footnote{We use $\truefrm$ and $\falsefrm$ for
    predicates and formulae; their meaning will always be clear from the
    context.  For boolean values we will use instead the notations
    $\trueval$ and $\falseval$.}}  The set $\EDHLSen(\Sigma)$ of
\emph{$\Sigma$"=ed sentences} consists of all $\Sigma$"=ed formulae
without free variables, where the free variables are defined as usual
with $\hybind x$ being the unique operator binding variables.
\end{definition}

Given an ed signature $\Sigma$ and a $\Sigma$"=edts $M$, the
satisfaction of a $\Sigma$"=ed formula $\varrho$ is inductively defined
w.r.t.\ valuations $v : X \to C(M)$, mapping variables to control
states, and configurations $\gamma \in \Conf(M)$:
\begin{itemize}[leftmargin=*, topsep=2pt]
  \item $M, v, \gamma \EDHLmodels{\Sigma} \varphi$ iff $\datast(\gamma)
\datamodels{A(\Sigma)} \varphi$;
  \item $M, v, \gamma \EDHLmodels{\Sigma} x$ iff $\ctrlst(\gamma) = v(x)$;
  \item $M, v, \gamma \EDHLmodels{\Sigma} \hybind x \,.\, \varrho$
iff $M, v\{ x \mapsto \ctrlst(\gamma) \}, \gamma \EDHLmodels{\Sigma} \varrho$;
  \item $M, v, \gamma \EDHLmodels{\Sigma} \hyat x \,.\, \varrho$ iff $M,
v, \gamma' \EDHLmodels{\Sigma} \varrho$ for all $\gamma' \in
\Conf(M)$ with $\ctrlst(\gamma') = v(x)$;
  \item $M, v, \gamma \EDHLmodels{\Sigma} \dldia{\lambda}\varrho$ iff
$M, v, \gamma' \EDHLmodels{\Sigma} \varrho$ for some $\gamma' \in \Conf(M)$ with $(\gamma, \gamma')
\in \Rel(M)_{\lambda}$;
  \item $M, v, \gamma \EDHLmodels{\Sigma} \truefrm$ always holds;
  \item $M, v, \gamma \EDHLmodels{\Sigma} \neg\varrho$ iff $M, v, \gamma
\not\EDHLmodels{\Sigma} \varrho$;
  \item $M, v, \gamma \EDHLmodels{\Sigma} \varrho_1 \lor \varrho_2$ iff
$M, v, \gamma \EDHLmodels{\Sigma} \varrho_1$ or $M, v, \gamma
\EDHLmodels{\Sigma} \varrho_2$.
\end{itemize}
If $\varrho$ is a sentence then the valuation is irrelevant.
$M$ \emph{satisfies} a sentence $\varrho \in \EDHLSen(\Sigma)$,
denoted by $M \EDHLmodels{\Sigma} \varrho$,
if $M, \gamma_0 \EDHLmodels{\Sigma}
\varrho$ for all $\gamma_0 \in \Conf_0(M)$.
 

By borrowing the modalities from dynamic
logic~\cite{harel-kozen-tiuryn:2000,groote-mousavi:2014}, $\EDHL$ is
able to express liveness and safety requirements as illustrated in our
running ATM example below. There we use the fact that we can state
properties over all reachable states by sentences of the form
$\dlbox{\boldsymbol{E}^*}\varphi$.  In particular, deadlock-freedom can
be expressed by
$\dlbox{\boldsymbol{E}^*}\dldia{\boldsymbol{E}}\truefrm$.  The logic
$\EDHL$, however, is also suited to directly express process structures
and, thus, the implementation of abstract requirements.  The binder
operator is essential for this.
For example, we can specify a process which switches a
boolean value, denoted by the attribute $\mathsf{val}$, from $\trueval$
to $\falseval$ and back by the following sentence:
\bgroup\abovedisplayskip6pt\belowdisplayskip6pt
\begin{equation*}
  \hybind x_0 \,.\, \mathsf{val} = \trueval \land \dldia{\lab{\mathsf{switch}}{\mathsf{val}' = \falseval}} \dldia{\lab{\mathsf{switch}}{\mathsf{val}' = \trueval}} x_0
\ \text{.}
\end{equation*}
\egroup

\subsection{Bisimulation and Invariance}\label{sec:invariance}

Bisimulation is a crucial notion in both behavioural systems
specification and in modal logics. On the specification side, it
provides a standard way to identify systems with the same behaviour by
abstracting the internal specifics of the systems; this is also
reflected at the logic side, where bisimulation frequently relates
states that satisfy the same formulae.  We explore some properties of
$\EDHL$ w.r.t.\ bisimilarity.
Let us first introduce the notion of bisimilarity in the context of
$\EDHL$:

\begin{definition}
Let $M_1, M_2$ be $\Sigma$"=edts.  A relation $B \subseteq \Conf(M_1)
\times \Conf(M_2)$ is a \emph{bisimulation relation} between $M_1$ and
$M_2$ if for all $(\gamma_1, \gamma_2) \in B$ the following conditions
hold:
\vskip2pt
\begin{enumerate}[leftmargin=*, topsep=0pt, partopsep=0pt, label={(atom)}, ref={atom}]
  \item\label{it:bisim-atom} for all $\varphi \in
\stFrm(\Sigma)$, $\omega(\gamma_1) \datamodels{A(\Sigma)} \varphi$ iff
$\omega(\gamma_2) \datamodels{A(\Sigma)} \varphi$;
\end{enumerate}
\begin{enumerate}[leftmargin=*, topsep=0pt, partopsep=0pt, label={(zig)}, ref={zig}]
  \item\label{it:bisim-zig} for all $\lab{e}{\psi} \in
\Lab(\Sigma)$ and for all $\gamma_1' \in \Conf(M_1)$ with $(\gamma_1,
\gamma_1') \in R(M_1)_{\lab{e}{\psi}}$, there is a $\gamma_2' \in
\Conf(M_2)$ such that $(\gamma_2, \gamma_2') \in R(M_2)_{\lab{e}{\psi}}$
and $(\gamma_1', \gamma_2') \in B$;
\end{enumerate}
\begin{enumerate}[leftmargin=*, topsep=0pt, partopsep=0pt, label={(zag)}, ref={zag}]
  \item\label{it:bisim-zag} for all $\lab{e}{\psi} \in
\Lab(\Sigma)$ and for all $\gamma_2' \in \Conf(M_2)$ with $(\gamma_2,
\gamma_2') \in R(M_2)_{\lab{e}{\psi}}$, there is a $\gamma_1' \in
\Conf(M_1)$ such that $(\gamma_1, \gamma_1') \in R(M_1)_{\lab{e}{\psi}}$
and $(\gamma_1', \gamma_2') \in B$.
\end{enumerate}
\vskip2pt
$M_1$ and $M_2$ are \emph{bisimilar}, in symbols $M_1 \bisim
M_2$, if there exists a bisimulation relation $B \subseteq \Conf(M_1)
\times \Conf(M_2)$ between $M_1$ and $M_2$ such that
\begin{enumerate}[leftmargin=*, topsep=2pt, partopsep=0pt, label={(init)}, ref={init}]
  \item\label{it:bisim-init} for any $\gamma_1 \in \Conf_0(M_1)$, there
is a $\gamma_2 \in \Conf_0(M_2)$ such that $(\gamma_1,\gamma_2)\in B$
and for any $\gamma_2 \in \Conf_0(M_2)$, there is a $\gamma_1 \in
\Conf_0(M_1)$ such that $(\gamma_1,\gamma_2)\in B$.
\end{enumerate}
\end{definition}

Now we are able to establish a Hennessy-Milner like correspondence for a
fragment of $\EDHL$.  Let us call \emph{hybrid-free sentences of
  $\EDHL$} the formulae obtained by the grammar
%
\begin{equation*}
  \varrho ::= \varphi \,\mid \, \dldia{\lambda} \varrho \,\mid\, \truefrm \,\mid\, \neg \varrho \,\mid\, \varrho_1 \lor \varrho_2
\ \text{.}
\end{equation*}

\begin{theorem}\label{thm:bisim-invariance}
Let $M_1, M_2$ be bisimilar $\Sigma$-edts.  Then $M_1
\EDHLmodels{\Sigma} \varrho$ iff $M_2 \EDHLmodels{\Sigma} \varrho$ for
all hybrid-free sentences $\varrho$.
\end{theorem}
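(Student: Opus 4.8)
The plan is to reduce the statement to a configuration-wise claim and prove that by structural induction on hybrid-free formulae. Concretely, fix bisimilar $M_1, M_2$ together with a witnessing bisimulation $B$ (so in particular \cref{it:bisim-init} holds), and establish: for every $(\gamma_1, \gamma_2) \in B$ and every hybrid-free formula $\varrho$, $M_1, \gamma_1 \EDHLmodels{\Sigma} \varrho$ iff $M_2, \gamma_2 \EDHLmodels{\Sigma} \varrho$; since hybrid-free formulae contain no control state variables, they are sentences and valuations play no role. Granting this, the theorem is immediate: by \cref{it:bisim-init} every $\gamma_1 \in \Conf_0(M_1)$ is $B$-related to some $\gamma_2 \in \Conf_0(M_2)$ and conversely, so the condition ``$M_i, \gamma \EDHLmodels{\Sigma} \varrho$ for all $\gamma \in \Conf_0(M_i)$'' holds for $i = 1$ iff it holds for $i = 2$. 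In the induction, the cases $\varphi \in \stFrm(\Sigma)$ (immediate from \cref{it:bisim-atom}), $\truefrm$, $\neg\varrho$, and $\varrho_1 \lor \varrho_2$ are routine, using only the induction hypothesis and the boolean semantics.

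The only real work is the modal case $\dldia{\lambda}\varrho$, and the difficulty is that the back-and-forth conditions \cref{it:bisim-zig,it:bisim-zag} are postulated only for \emph{atomic} actions $\lab{e}{\psi}$, whereas $\lambda$ may be a structured action built with $+$, $;$, and $(-)^*$. So the key auxiliary step is a \emph{transfer lemma}: any bisimulation relation $B$ between $M_1$ and $M_2$ satisfies \cref{it:bisim-zig,it:bisim-zag} for every $\lambda \in \Lab(\Sigma)$, i.e.\ whenever $(\gamma_1, \gamma_2) \in B$ and $(\gamma_1, \gamma_1') \in \Rel(M_1)_{\lambda}$ there is $\gamma_2'$ with $(\gamma_2, \gamma_2') \in \Rel(M_2)_{\lambda}$ and $(\gamma_1', \gamma_2') \in B$, and symmetrically with the roles of $M_1$ and $M_2$ exchanged. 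Once this is available, the modal case closes quickly: if $M_1, \gamma_1 \EDHLmodels{\Sigma} \dldia{\lambda}\varrho$, choose a witness $\gamma_1'$ with $(\gamma_1, \gamma_1') \in \Rel(M_1)_{\lambda}$ and $M_1, \gamma_1' \EDHLmodels{\Sigma} \varrho$; the transfer lemma yields $\gamma_2'$ with $(\gamma_2, \gamma_2') \in \Rel(M_2)_{\lambda}$ and $(\gamma_1', \gamma_2') \in B$, and the induction hypothesis for $\varrho$ gives $M_2, \gamma_2' \EDHLmodels{\Sigma} \varrho$, hence $M_2, \gamma_2 \EDHLmodels{\Sigma} \dldia{\lambda}\varrho$; the converse uses the symmetric half of the lemma.

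The transfer lemma itself is proved by induction on the structure of $\lambda$, mirroring the inductive definition of $\Rel(M)_{\lambda}$. The base case $\lambda = \lab{e}{\psi}$ is exactly \cref{it:bisim-zig,it:bisim-zag}. For $\lambda_1 + \lambda_2$, a step in $\Rel(M_1)_{\lambda_1 + \lambda_2} = \Rel(M_1)_{\lambda_1} \cup \Rel(M_1)_{\lambda_2}$ lies in one of the two summands, and one applies the corresponding induction hypothesis. For $\lambda_1 ; \lambda_2$, a step $(\gamma_1, \gamma_1') \in \Rel(M_1)_{\lambda_1} ; \Rel(M_1)_{\lambda_2}$ factors through some intermediate $\bar\gamma_1$; the induction hypothesis for $\lambda_1$ produces $\bar\gamma_2$ with $(\bar\gamma_1, \bar\gamma_2) \in B$, and then the induction hypothesis for $\lambda_2$ produces the required $\gamma_2'$. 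The case $\lambda^*$ is the crux: a step in $(\Rel(M_1)_{\lambda})^*$ is a finite $\Rel(M_1)_{\lambda}$-path of some length $n$, and one runs an inner induction on $n$ — the base $n = 0$ uses that $(\gamma_1, \gamma_2) \in B$ already (reflexive closure), and the inductive step extends a matching $\Rel(M_2)_{\lambda}$-path by one application of the $\lambda$-case — to obtain a matching $\Rel(M_2)_{\lambda}$-path of the same length ending in a $B$-related configuration, i.e.\ a step in $(\Rel(M_2)_{\lambda})^*$. I expect this nested induction to be the only place demanding any care; everything else is bookkeeping.

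Finally, it is worth recording why the restriction to hybrid-free sentences cannot be dropped: the clauses for $x$, $\hybind x$, and $\hyat x$ refer to $\ctrlst(\gamma)$, which a bisimulation relation need not preserve — \cref{it:bisim-atom} constrains only the data component $\datast(\gamma)$ — so the induction genuinely breaks for those connectives, consistent with the earlier remark in \cref{sec:invariance} that the hybrid part of $\EDHL$ is not bisimulation invariant.
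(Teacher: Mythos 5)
Your proposal is correct and follows essentially the same route as the paper: a configuration-wise lemma proved by induction on hybrid-free formulae, combined with condition \cref{it:bisim-init} to pass between sets of initial configurations. The only difference is that where you spell out the induction on the structure of $\lambda$ showing that \cref{it:bisim-zig,it:bisim-zag} lift from atomic to composed actions (including the inner induction on path length for $\lambda^*$), the paper simply appeals to the standard ``safety for bisimulation'' of the dynamic-logic action constructors, noting that it is provable by exactly that induction.
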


The converse of \cref{thm:bisim-invariance} does not hold, in general,
and the usual image-finiteness assumption has to be imposed: A
$\Sigma$-edts $M$ is \emph{image-finite} if, for all $\gamma \in
\Conf(M)$ and all $e \in \Evt(\Sigma)$, the set $\{ \gamma' \mid
(\gamma, \gamma') \in R(M)_e \}$ is finite. Then:

\begin{theorem}\label{bisinvariance2}
Let $M_1, M_2$ be image-finite $\Sigma$-edts and $\gamma_1 \in
\Conf(M_1)$, $\gamma_2 \in \Conf(M_2)$ such that
$M_1,\gamma_1\EDHLmodels{\Sigma} \varrho$ iff $M_2,\gamma_2
\EDHLmodels{\Sigma} \varrho$ for all hybrid-free sentences
$\varrho$. Then there exists a bisimulation $B$ between $M_1$ and $M_2$
such that $(\gamma_1,\gamma_2)\in B$.
\end{theorem}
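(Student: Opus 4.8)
The plan is to take for $B$ the relation of logical indistinguishability by hybrid-free sentences, i.e.
\[
  B = \{ (\gamma_1, \gamma_2) \in \Conf(M_1) \times \Conf(M_2) \mid M_1, \gamma_1 \EDHLmodels{\Sigma} \varrho \iff M_2, \gamma_2 \EDHLmodels{\Sigma} \varrho \text{ for all hybrid-free } \varrho \} \,,
\]
and to verify that $B$ is a bisimulation relation between $M_1$ and $M_2$ in the sense of the definition above. The pair $(\gamma_1, \gamma_2)$ lies in $B$ by hypothesis, so this suffices; note that the claimed bisimulation need not satisfy \cref{it:bisim-init}, hence initial configurations play no role. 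Condition \cref{it:bisim-atom} is immediate: every state predicate $\varphi \in \stFrm(\Sigma)$ is a hybrid-free sentence, so for $(\gamma_1, \gamma_2) \in B$ we get $\datast(\gamma_1) \datamodels{A(\Sigma)} \varphi$ iff $M_1, \gamma_1 \EDHLmodels{\Sigma} \varphi$ iff $M_2, \gamma_2 \EDHLmodels{\Sigma} \varphi$ iff $\datast(\gamma_2) \datamodels{A(\Sigma)} \varphi$.

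For \cref{it:bisim-zig}, let $(\gamma_1, \gamma_2) \in B$, let $\lab{e}{\psi} \in \Lab(\Sigma)$, and let $\gamma_1'$ with $(\gamma_1, \gamma_1') \in \Rel(M_1)_{\lab{e}{\psi}}$. Consider $S = \{ \gamma_2' \in \Conf(M_2) \mid (\gamma_2, \gamma_2') \in \Rel(M_2)_{\lab{e}{\psi}} \}$; since $\Rel(M_2)_{\lab{e}{\psi}} \subseteq \Rel(M_2)_e$ and $M_2$ is image-finite, $S$ is finite, say $S = \{ \gamma_2^1, \ldots, \gamma_2^k \}$ with $k \geq 0$. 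Suppose, towards a contradiction, that $(\gamma_1', \gamma_2^j) \notin B$ for every $j$. Then for each $j$ there is a hybrid-free sentence $\varrho_j$ with $M_1, \gamma_1' \EDHLmodels{\Sigma} \varrho_j$ but $M_2, \gamma_2^j \not\EDHLmodels{\Sigma} \varrho_j$ — if the witnessing sentence separates the two configurations the other way round, replace it by its negation, which is again hybrid-free. Put $\varrho = \truefrm \land \bigwedge_{j=1}^{k} \varrho_j$; this is hybrid-free (conjunction being expressible via $\neg$ and $\lor$), and so is $\dldia{\lab{e}{\psi}} \varrho$. Now $M_1, \gamma_1' \EDHLmodels{\Sigma} \varrho$ together with $(\gamma_1, \gamma_1') \in \Rel(M_1)_{\lab{e}{\psi}}$ yields $M_1, \gamma_1 \EDHLmodels{\Sigma} \dldia{\lab{e}{\psi}} \varrho$, whereas no element of $S$ satisfies $\varrho$, so $M_2, \gamma_2 \not\EDHLmodels{\Sigma} \dldia{\lab{e}{\psi}} \varrho$ — contradicting $(\gamma_1, \gamma_2) \in B$. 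Hence some $\gamma_2^j$ satisfies $(\gamma_1', \gamma_2^j) \in B$, as \cref{it:bisim-zig} demands. (When $k = 0$ the argument specialises to $\varrho = \truefrm$ and shows that the mere existence of the $\lab{e}{\psi}$-successor $\gamma_1'$ is already contradictory.) Condition \cref{it:bisim-zag} is proved symmetrically, using image-finiteness of $M_1$.

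The step requiring the most care is the use of image-finiteness: the modalities of $\EDHL$ range over the infinite family of atomic actions $\lab{e}{\psi}$, so finiteness \emph{per event} $e$ — rather than per atomic action — is exactly what is needed, since the $\lab{e}{\psi}$-successors form a subset of the finite set of $e$-successors, and it is finiteness of that set that lets us collapse the separating sentences $\varrho_j$ into a single finite conjunction. The structured actions (union, sequential composition, iteration) cause no difficulty here, because the back-and-forth clauses of the bisimulation definition only refer to atomic actions; that complex modalities also occur in the hybrid-free grammar merely strengthens the equivalence defining $B$, which is harmless for the argument that $B$ is a bisimulation.
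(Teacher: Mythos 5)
Your proof is correct and follows essentially the same route as the paper's: both take $B$ to be the relation of indistinguishability by hybrid-free sentences, verify \cref{it:bisim-atom} via state predicates, and establish \cref{it:bisim-zig}/\cref{it:bisim-zag} by contradiction, using image-finiteness to collapse the separating sentences for the finitely many $\lab{e}{\psi}$-successors into a single conjunction under a diamond modality. Your treatment is in fact slightly more careful than the paper's, making explicit the negation step for wrongly-oriented separating sentences and the degenerate case of an empty successor set.
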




\section{Specifications of Event/Data Systems}\label{sec:spec}
\subsection{Axiomatic Specifications}\label{sec:axiomatic}
Sentences of $\EDHL$-logic can be used to specify properties of
event/data systems and thus to write system specifications in an
axiomatic way.

\begin{definition}
An \emph{axiomatic ed specification} $\Spec = (\Sig(\Spec), \Ax(\Spec))$
in $\EDHL$ consists of an ed signature $\Sig(\Spec) \in \EDHLSig$ and a
set of \emph{axioms} $\Ax(\Spec) \subseteq \EDHLSen(\Sig(\Spec))$.

The \emph{semantics of $\Spec$} is given by the pair $(\Sig(\Spec),
\Mod(\Spec))$ where $\Mod(\Spec) = \{ M \in \EDHLStr(\Sig(\Spec)) \mid M
\EDHLmodels{\Sig(\Spec)} \Ax(\Spec) \}$.  The edts in $\Mod(\Spec)$ are
called \emph{models} of $\Spec$ and $\Mod(\Spec)$ is the \emph{model
  class} of $\Spec$.
\end{definition}

As a direct consequence of \cref{thm:bisim-invariance} we have:

\begin{corollary}\label{cor:bisim-closed}
The model class of an axiomatic ed specification exclusively expressed by
hybrid-free sentences is closed under bisimulation.
\end{corollary}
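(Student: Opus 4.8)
The plan is to unfold the definition of \enquote{closed under bisimulation} and then reduce the statement immediately to \cref{thm:bisim-invariance}. Concretely, let $\Spec$ be an axiomatic ed specification all of whose axioms in $\Ax(\Spec)$ are hybrid-free $\Sig(\Spec)$-sentences, write $\Sigma = \Sig(\Spec)$, and suppose $M_1 \in \Mod(\Spec)$ and $M_2 \in \EDHLStr(\Sigma)$ with $M_1 \bisim M_2$. The goal is to show $M_2 \in \Mod(\Spec)$, i.e.\ $M_2 \EDHLmodels{\Sigma} \varrho$ for every $\varrho \in \Ax(\Spec)$.

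First I would fix an arbitrary axiom $\varrho \in \Ax(\Spec)$. Since $M_1 \in \Mod(\Spec)$, we have $M_1 \EDHLmodels{\Sigma} \varrho$; and since $\varrho$ is by assumption a hybrid-free sentence and $M_1 \bisim M_2$, \cref{thm:bisim-invariance} yields $M_2 \EDHLmodels{\Sigma} \varrho$. As $\varrho$ was arbitrary, $M_2 \EDHLmodels{\Sigma} \Ax(\Spec)$, hence $M_2 \in \Mod(\Spec)$. This establishes that whenever a model of $\Spec$ is bisimilar to some $\Sigma$-edts, the latter is again a model, which is exactly the claim. (Only one direction of the biconditional in \cref{thm:bisim-invariance} is used here, but of course it suffices.)

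There is essentially no obstacle: the statement is a one-line consequence of \cref{thm:bisim-invariance} once the relevant definitions are unwound. The only points requiring (minimal) care are that \enquote{closed under bisimulation} is to be read with respect to the fixed signature $\Sigma$, so that $M_2$ ranges over $\EDHLStr(\Sigma)$ and the satisfaction relation $\EDHLmodels{\Sigma}$ applies unchanged to both models; and that \cref{thm:bisim-invariance} is applicable precisely because each axiom lies in the hybrid-free fragment, which is exactly the hypothesis imposed on $\Spec$.
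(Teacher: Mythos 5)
Your proof is correct and matches the paper's: both reduce the claim to bisimulation invariance of hybrid-free sentences applied axiom by axiom (the paper's appendix proof cites \cref{lem:bisim-invariance}, while you invoke \cref{thm:bisim-invariance} directly, which is exactly the "direct consequence" the main text announces and is, if anything, the cleaner reference since it already accounts for the initial-configuration condition). No gaps.
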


This result does not hold for sentences with hybrid features. For
instance, consider the specification $\Spec=\big((\{e\},\{a\}),\{\hybind
x \,.\, \dldia{\lab{e}{a'=a}} x\}\big)$: An edts with a single control
state $c_0$ and a loop transition $R_e= \{ (\gamma_0, \gamma_0) \}$ for
$\ctrlst(\gamma_0) = c_0$ is a model of $\Spec$.  However, this is
obviously not the case for its bisimilar edts with two control states
$c_0$ and $c$ and the relation $R'_e=\{ (\gamma_0, \gamma), (\gamma,
\gamma_0) \}$ with $\ctrlst(\gamma_0) = c_0$, $\ctrlst(\gamma) = c$ and
$\datast(\gamma_0) = \datast(\gamma)$.

\begin{example}\label{ex:spec0}
As a running example we consider an ATM.  We start with an abstract
specification $\Spec_0$ of fundamental requirements for its interaction
behaviour based on the set of events $E_0 = \{ \mathsf{insertCard},
\mathsf{enterPIN}, \mathsf{ejectCard}, \mathsf{cancel} \}$\footnote{For
  shortening the presentation we omit further events like withdrawing
  money, etc.} and on the singleton set of attributes $A_0 = \{
\mathsf{chk} \}$ where $\mathsf{chk}$ is boolean valued and records the
correctness of an entered PIN.  Hence our first ed signature is
$\Sigma_0 = (E_0, A_0)$ and $\Spec_0 = (\Sigma_0, \Ax_0)$ where $\Ax_0$
requires the following properties expressed by corresponding axioms
\crefrange{eq:0.1}{eq:0.3}:
\begin{itemize}[leftmargin=*, topsep=2pt]
  \item ``Whenever a card has been inserted, a correct PIN can
  eventually be entered and also the transaction can eventually be
  cancelled.''
\bgroup\abovedisplayskip6pt\belowdisplayskip6pt 
\begin{equation}\tag{0.1}\label{eq:0.1}
  \dlbox{\boldsymbol{E}^*; \mathsf{insertCard}}(\dldia{\boldsymbol{E}^*;\lab{\mathsf{enterPIN}}{\mathsf{chk}' = \trueval}}\truefrm
\land
\dldia{\boldsymbol{E}^*; \mathsf{cancel}}\truefrm)
\end{equation}
\egroup

  \item ``Whenever either a correct PIN has been entered or the
transaction has been cancelled, the card can eventually be ejected.''
\bgroup\abovedisplayskip6pt\belowdisplayskip6pt 
\begin{equation}\tag{0.2}\label{eq:0.2}
  \dlbox{\boldsymbol{E}^*; (\lab{\mathsf{enterPIN}}{\mathsf{chk}' = \trueval})
  + \mathsf{cancel}}\dldia{\boldsymbol{E}^*;\mathsf{ejectCard}}\truefrm
\end{equation}
\egroup

  \item ``Whenever an incorrect PIN has been entered three times in a
row, the current card is not returned.'' This means that the card is
kept by the ATM which is not modelled by an extra event. It may,
however, still be possible that another card is inserted afterwards.  So
an $\mathsf{ejectCard}$ can only be forbidden as long as no next card is
inserted.
\bgroup\abovedisplayskip6pt\belowdisplayskip6pt 
\begin{equation}\tag{0.3}\label{eq:0.3}
  \dlbox{\boldsymbol{E}^*; (\lab{\mathsf{enterPIN}}{\mathsf{chk}' = \falseval})^3;\, (-\mathsf{insertCard})^*; \mathsf{ejectCard}}\falsefrm
\end{equation}
\egroup
where $\lambda^n$ abbreviates the $n$"=fold sequential composition
$\lambda; \ldots; \lambda$.
\end{itemize}
\end{example}

The semantics of an axiomatic ed specification is loose allowing usually
for many different realisations.  A refinement step is therefore
understood as a restriction of the model class of an abstract
specification.  Following the terminology of Sannella and
Tarlecki~\cite{DBLP:journals/acta/SannellaT88,sannella-tarlecki:2012},
we call a specification refining another one an \emph{implementation}.
Formally, a specification $\Spec'$ is a \emph{simple implementation} of
a specification $\Spec$ over the same signature, in symbols $\Spec
\refinesto \Spec'$, whenever $\Mod(\Spec) \supseteq \Mod(\Spec')$.
Transitivity of the inclusion relation ensures gradual step-by-step
development by a series of refinements.

\begin{example}\label{ex:spec1}
We provide a refinement $\Spec_0 \refinesto \Spec_1$ where $\Spec_1 =
(\Sigma_0,\allowbreak \Ax_1)$ has the same signature as $\Spec_0$ and
$\Ax_1$ are the sentences \crefrange{eq:1.1}{eq:1.4} below; the last two
use binders to specify a loop.  As is easily seen, all models of
$\Spec_1$ must satisfy the axioms of $\Spec_0$.
\begin{itemize}[leftmargin=*, topsep=2pt]
  \item ``At the beginning a card can be inserted with the effect that
$\mathsf{chk}$ is set to $\falseval$; nothing else is possible at the
beginning.''
\bgroup\abovedisplayskip6pt\belowdisplayskip6pt 
\begin{equation}\tag{1.1}\label{eq:1.1}
\begin{split}
&\dldia{\lab{\mathsf{insertCard}}{\mathsf{chk}' = \falseval}}\truefrm \land{}\\[-.5ex]
&\dlbox{\lab{\mathsf{insertCard}}{\neg(\mathsf{chk}' = \falseval)}} \falsefrm \land \dlbox{-\mathsf{insertCard}} \falsefrm
\end{split}
\end{equation}
\egroup

  \item ``Whenever a card has been inserted, a PIN can  be entered (directly afterwards) and
also the transaction can be cancelled; but nothing else.''
\bgroup\abovedisplayskip6pt\belowdisplayskip6pt 
\begin{equation}\tag{1.2}\label{eq:1.2}
\begin{split}
  \dlbox{\boldsymbol{E}^*; \mathsf{insertCard}}(&\dldia{\mathsf{enterPIN}}\truefrm \land \dldia{\mathsf{cancel}}\truefrm \land{}\\[-.5ex]
                                                &\dlbox{-\{\mathsf{enterPIN}, \mathsf{cancel}\}}\falsefrm)
\end{split}
\end{equation}
\egroup

  \item ``Whenever either a correct PIN has been entered or the
transaction has been cancelled, the card can eventually be ejected and
the ATM starts from the beginning.''
\bgroup\abovedisplayskip6pt\belowdisplayskip6pt 
\begin{equation}\tag{1.3}\label{eq:1.3}
  \hybind x_0 \,.\,\dlbox{\boldsymbol{E}^*; (\lab{\mathsf{enterPIN}}{\mathsf{chk}' = \trueval}) + \mathsf{cancel}}\dldia{\boldsymbol{E}^*;\mathsf{ejectCard}}x_0
\end{equation}
\egroup

  \item ``Whenever an incorrect PIN has been entered three times in a
row the ATM starts from the beginning.'' Hence the current card is kept.
\bgroup\abovedisplayskip6pt\belowdisplayskip6pt 
\begin{equation}\tag{1.4}\label{eq:1.4}
  \hybind x_0 \,.\, \dlbox{\boldsymbol{E}^*; (\lab{\mathsf{enterPIN}}{\mathsf{chk}' = \falseval})^3}x_0
\end{equation}
\egroup
\end{itemize}
\end{example}

\subsection{Operational Specifications}\label{sec:op-spec}

Operational event/data specifications are introduced as a means to
specify in a more constructive style the properties of event/data
systems.  They are not appropriate for writing abstract requirements for
which axiomatic specifications are recommended.  Though $\EDHL$-logic is
able to specify concrete models, as discussed in \cref{sec:logic}, the
use of operational specifications allows a graphic representation close
to familiar formalisms in the literature, like UML protocol state
machines, cf.~\cite{uml-2.5,knapp-et-al:fase:2015}.  As will be shown in
\cref{sec:expressivity}, finite operational specifications can be
characterised by a sentence in $\EDHL$-logic.  Therefore, $\EDHL$-logic
is still the common basis of our development approach.  Transitions in
an operational specification are tuples $(c, \varphi, e, \psi, c')$ with
$c$ a source control state, $\varphi$ a precondition, $e$ an event,
$\psi$ a state transition predicate specifying the possible effects of
the event $e$, and $c'$ a target control state. In the semantic models
an event must be enabled whenever the respective source data state
satisfies the precondition.  Thus isolating preconditions has a semantic
consequence that is not expressible by transition predicates only.  The
effect of the event must respect $\psi$; no other transitions are
allowed.

\begin{definition}
An \emph{operational ed specification} $O = (\Sigma, C,\allowbreak
T,\allowbreak (c_0, \varphi_0))$ is given by an ed signature $\Sigma$, a
set of \emph{control states} $C$, a \emph{transition relation
  specification} $T \subseteq C \times \stFrm(\Sigma) \times
\Evt(\Sigma) \times \trFrm(\Sigma) \times C$, an \emph{initial control
  state} $c_0 \in C$, and an \emph{initial state predicate} $\varphi_0
\in \stFrm(\Sigma)$, such that $C$ is \emph{syntactically reachable},
  i.e., for every $c \in C
\setminus \{ c_0 \}$ there are $(c_0, \varphi_1, e_1, \psi_1, c_1),
\ldots, (c_{n-1}, \varphi_n, e_n,\allowbreak \psi_n,\allowbreak c_n) \in
T$ with $n > 0$ such that $c_n = c$.  We write $\Sig(O)$ for $\Sigma$,
etc.

A $\Sigma$"=edts $M$ is a \emph{model} of $O$ if $\CtrlSt(M) = C$ up to
a bijective renaming, $\ctrlst_0(M) = c_0$, $\DataSt_0(M) \subseteq \{
\omega \mid \omega \datamodels{\Attr(\Sigma)} \varphi_0 \}$, and if the
following conditions hold:
\begin{itemize}[leftmargin=*, topsep=2pt]
  \item for all $(c, \varphi, e, \psi, c') \in \Trans$ and $\omega \in
\DataSt(\Attr(\Sigma))$ with $\omega \datamodels{\Attr(\Sigma)}
\varphi$, there is a $((c, \omega),\allowbreak (c',
\omega')) \in \Rel(M)_e$ with $(\omega,\allowbreak \omega')
\datamodels{\Attr(\Sigma)} \psi$;
  \item for all $((c, \omega), (c', \omega')) \in R(M)_e$ there is a
$(c, \varphi, e, \psi, c') \in \Trans$ with $\omega
\datamodels{\Attr(\Sigma)} \varphi$ and $(\omega,\allowbreak \omega')
\datamodels{\Attr(\Sigma)} \psi$.
\end{itemize}
The class of all models of $O$ is denoted by $\Mod(O)$.  The
\emph{semantics} of $O$ is given by the pair $(\Sig(O), \Mod(O))$ where
$\Sig(O) = \Sigma$.
\end{definition}

\begin{example}\label{ex:atm}
We construct an operational ed specification, called $\mathit{ATM}$, for
the ATM example.  The signature of $\mathit{ATM}$ extends the one of
$\Spec_1$ (and $\Spec_0$) by an additional integer-valued attribute
$\mathsf{trls}$ which counts the number of attempts to enter a correct
PIN (with the same card).  $\mathit{ATM}$ is graphically presented in
\cref{fig:op-spec-ATM}.  The initial control state is $\mathit{Card}$,
and the initial state predicate is $\truefrm$.  Preconditions are
written before the symbol $\limp$. If no precondition is explicitly
indicated it is assumed to be $\truefrm$.  Due to the extended
signature, $\mathit{ATM}$ is not a simple implementation of $\Spec_1$,
and we will only formally justify the implementation relationship in
\cref{ex:atm-justify}.
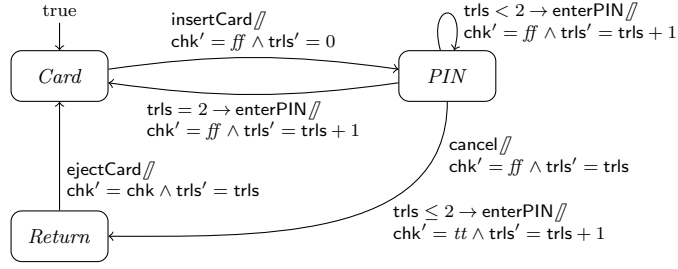
\begin{figure}[!t]\centering
\begin{tikzpicture}[scale=.85, transform shape]
\tikzset{
  every node/.append style={align=left},
  every state/.append style={rectangle, rounded corners, minimum width=1.5cm},
}
\node[state, initial text={\fontsize{8pt}{8pt}\selectfont$\truefrm$}, initial above] (Card) {$\mathit{Card}$};
\node[state, right of=Card, node distance=6.0cm] (PIN) {$\mathit{PIN}$};
\node[state, below of=Card, node distance=2.5cm] (Return) {$\mathit{Return}$};
\path[->,font={\fontsize{8pt}{8pt}\selectfont}] 
  (Card) edge[bend left=8] node[anchor=base, above] {$\mathsf{insertCard} \nfatslash{}$\\ $\mathsf{chk}' = \falseval \land \mathsf{trls}' = 0$} (PIN)
  (PIN) edge[loop above, min distance=8mm] node[anchor=west, right, xshift=4pt, yshift=-4pt] {$\mathsf{trls} < 2 \limp \mathsf{enterPIN} \nfatslash{}$\\ $\mathsf{chk}' = \falseval \land \mathsf{trls}' = \mathsf{trls}+1$ } (PIN)
  (PIN) edge[in=0, out=-90] node[anchor=east, pos=.5, right, xshift=24pt]{$\mathsf{trls} \leq 2 \limp \mathsf{enterPIN} \nfatslash{}$\\$\mathsf{chk}' = \trueval \land \mathsf{trls}' = \mathsf{trls}+1$}
                            node[anchor=east, pos=.15, right, xshift=3pt] {$\mathsf{cancel} \nfatslash{}$\\ $\mathsf{chk}' = \falseval \land \mathsf{trls}' = \mathsf{trls}$} (Return)
  (PIN) edge[bend left=8] node[anchor=top, below] {$\mathsf{trls} = 2 \limp \mathsf{enterPIN} \nfatslash{}$\\$\mathsf{chk}' = \falseval \land \mathsf{trls}' = \mathsf{trls}+1$} (Card)
  (Return) edge[] node[anchor=west, right, yshift=-10pt] {$\mathsf{ejectCard} \nfatslash{}$\\$\mathsf{chk}' = \mathsf{chk} \land \mathsf{trls}' = \mathsf{trls}$} (Card)
;
\end{tikzpicture}
\caption{Operational ed specification $\mathit{ATM}$}%
\label{fig:op-spec-ATM}%
\vskip-2pt
\end{figure}
\end{example}

Operational specifications can be composed by a syntactic parallel
composition operator which synchronises shared events.  Two ed
signatures $\Sigma_1$ and $\Sigma_2$ are \emph{composable} if
$\Attr(\Sigma_1) \cap \Attr(\Sigma_2) = \emptyset$.  Their parallel
composition is given by $\Sigma_1 \otimes \Sigma_2 = (\Evt(\Sigma_1)
\cup \Evt(\Sigma_2), \Attr(\Sigma_1) \cup \Attr(\Sigma_2))$.

\begin{definition}
Let $\Sigma_1$ and $\Sigma_2$ be composable ed signatures and let $O_1$
and $O_2$ be operational ed specifications with $\Sigma(O_1) = \Sigma_1$
and $\Sigma(O_2) = \Sigma_2$.  The \emph{parallel composition} of $O_1$
and $O_2$ is given by the operational ed specification $O_1 \parallel
O_2 = (\Sigma_1 \otimes \Sigma_2,\allowbreak C,\allowbreak T,\allowbreak (c_0, \varphi_0))$ with $c_0 =
(\ctrlst_0(O_1), \ctrlst_0(O_2))$, $\varphi_0 = \initpred(O_1) \land
\initpred(O_2)$, and $C$ and $T$ are inductively defined by $c_0 \in C$
and
\begin{itemize}[leftmargin=*, topsep=2pt]
  \item for $e_1 \in \Evt(\Sigma_1) \setminus \Evt(\Sigma_2)$, $c_1,
c_1' \in \CtrlSt(O_1)$, and $c_2 \in \CtrlSt(O_2)$, if $(c_1, c_2) \in
C$ and $(c_1, \varphi_1, e_1,\allowbreak \psi_1,\allowbreak c_1') \in
\Trans(O_1)$, then $(c_1', c_2) \in C$ and $((c_1, c_2), \varphi_1, e_1,
\psi_1 \land \id{A(\Sig_2)},\allowbreak (c_1', c_2)) \in T$;

  \item for $e_2 \in \Evt(\Sigma_2) \setminus \Evt(\Sigma_1)$, $c_2,
c_2' \in \CtrlSt(O_2)$, and $c_1 \in \CtrlSt(O_1)$, if $(c_1, c_2) \in
C$ and $(c_2, \varphi_2, e_2,\allowbreak \psi_2,\allowbreak c_2') \in
\Trans(O_2)$, then $(c_1, c_2') \in C$ and $((c_1, c_2), \varphi_2, e_2,
\psi_2 \land \id{A(\Sig_1)},\allowbreak (c_1, c_2')) \in T$;

  \item for $e \in \Evt(\Sigma_1) \cap \Evt(\Sigma_2)$, $c_1, c_1' \in
\CtrlSt(O_1)$, and $c_2, c_2' \in \CtrlSt(O_2)$, if $(c_1, c_2) \in C$,
$(c_1, \varphi_1, e, \psi_1, c_1') \in \Trans(O_1)$, and $(c_2,
\varphi_2, e,\allowbreak \psi_2,\allowbreak c_2') \in \Trans(O_2)$, then
$(c_1', c_2') \in C$ and $((c_1, c_2), \varphi_1 \land \varphi_2, e,
\psi_1 \land \psi_2, (c_1', c_2')) \in T$.\footnote{
Note that joint moves with $e$ cannot become inconsistent due to composability of ed signatures.}
\end{itemize}
\end{definition}

\subsection{Expressiveness of $\EDHL$-Logic}\label{sec:expressivity}

We show that the semantics of an operational ed specification $O$ with
finitely many control states can be characterised by a single
\EDHL-sentence $\varrho_O$, i.e., an edts $M$ is a model of $O$ iff $M
\EDHLmodels{\Sig(O)} \varrho_O$.
\begin{algorithm}[t]
\caption{Constructing a sentence from an operational ed specification}
\label{alg:sen-op-spec}
\begin{algorithmic}[1]
\Require $O \equiv \text{finite operational ed specification}$
\Requirx $\mathit{Im}_O(c) = \{ (\varphi, e, \psi, c') \mid (c, \varphi, e, \psi, c') \in T(O) \}$ \text{for $c \in C(O)$}
\Requirx $\mathit{Im}_O(c, e) = \{ (\varphi, \psi, c') \mid (c, \varphi, e, \psi, c') \in T(O) \}$ \text{for $c \in C(O)$, $e \in \Evt(\Sig(O))$}
\Statex
\Function {$\mathrm{sen}$}{$c, I, V, B$} \Comment{$c$: state, $I$: image to visit, $V$: states to visit, $B$: bound states}
  \If{$I \neq \emptyset$}
    \State $(\varphi, e, \psi, c') \gets \Choose\ I$
    \If{$c' \in B$}
      \State \Return $\hyat c \,.\, \varphi \limp \dldia{\lab{e}{\psi}}(c' \land \mathrm{sen}(c, I \setminus \{ (\varphi, e, \psi, c') \}, V, B))$
    \Else
      \State \Return $\hyat c \,.\, \varphi \limp \dldia{\lab{e}{\psi}}(\hybind c' \,.\, \mathrm{sen}(c, I \setminus \{ (\varphi, e, \psi, c') \}, V, B \cup \{ c' \}))$
    \EndIf
  \EndIf
  \State $V \gets V \setminus \{ c \}$
  \If{$V \neq \emptyset$}
    \State $c' \gets \Choose\ B \cap V$
    \State \Return $\mathrm{fin}(c) \land \mathrm{sen}(c', \mathit{Im}_O(c'), V, B)$
  \EndIf
  \State \Return $\mathrm{fin}(c) \land \bigwedge_{c_1 \in C(O), c_2 \in C(O) \setminus \{ c_1 \}} \neg\hyat c_1 \,.\, c_2$
\EndFunction
\Statex
\Function {$\mathrm{fin}$}{$c$}
\State \Return $\hyat c \,.\, \begin{array}[t]{@{}l@{}}
                                \bigwedge_{e \in \Evt(\Sig(O))} \bigwedge_{P \subseteq \mathit{Im}_O(c, e)}\\{}
                                [e\nfatslash\begin{array}[t]{@{}l@{}}
                                              \big(\bigwedge_{(\varphi, \psi, c') \in P} (\varphi \land \psi)\big) \land{}\\
                                              \neg\big(\bigvee_{(\varphi, \psi, c') \in \mathit{Im}_O(c, e) \setminus P} (\varphi \land \psi)\big)
                                ] \big(\bigvee_{(\varphi, \psi, c') \in P} c'\big)
                                            \end{array}
                              \end{array}$
\EndFunction
\end{algorithmic}
\end{algorithm}
Using \cref{alg:sen-op-spec}, such a characterising sentence is
%
\begin{equation*}
  \varrho_O
=
  \hybind c_0 \,.\, \varphi_0 \land \mathrm{sen}(c_0, \mathit{Im}_O(c_0), \CtrlSt(O), \{ c_0 \})
\ \text{,}
\end{equation*}
%
where $c_0 = \ctrlst_0(O)$ and $\varphi_0 = \initpred(O)$.
\Cref{alg:sen-op-spec} closely follows the procedure
in~\cite{madeira-et-al:tcs:2018} for characterising a finite structure by
a sentence of $\DHL$-logic.  A call $\mathrm{sen}(c,\allowbreak I, V,
B)$ performs a recursive breadth-first traversal through $O$ starting
from $c$, where $I$ holds the unprocessed quadruples $(\varphi, e, \psi,
c')$ of transitions outgoing from $c$, $V$ the remaining states to
visit, and $B$ the set of already bound states.  The function first
requires the existence of each outgoing transition of $I$, provided its
precondition holds, in the resulting formula, binding any newly reached
state.  Then it requires that no other transitions with source state $c$
exist using calls to $\mathrm{fin}$.  Having visited all states in $V$,
it finally requires all states in $\CtrlSt(O)$ to be pairwise different.

It is $\mathrm{fin}(c)$ where this algorithm mainly deviates
from~\cite{madeira-et-al:tcs:2018}: To ensure that no other transitions
from $c$ exist than those specified in $O$, $\mathrm{fin}(c)$ produces
the requirement that at state $c$, for every event $e$ and for every
subset $P$ of the transitions outgoing from $c$, whenever an
$e$-transition can be done with the combined effect of $P$ but not
adhering to any of the effects of the currently not selected
transitions, the $e$-transition must have one of the states as its
target that are target states of $P$.  The rather complicated
formulation is due to possibly overlapping preconditions where for a
single event $e$ the preconditions of two different transitions may be
satisfied simultaneously.  For a state $c$, where all outgoing
transitions for the same event have disjoint preconditions, the
$\EDHL$-formula returned by $\mathrm{fin}(c)$ is equivalent to
%
\begin{equation*}\textstyle
  \hyat c \,.\, \bigwedge_{e \in \Evt(\Sig(O))} \begin{array}[t]{@{}l@{}}
                                                \bigwedge_{(\varphi, \psi, c') \in \mathit{Im}_O(c, e)} \dlbox{\lab{e}{\varphi \land \psi}} c' \land{}\\
                                                \dlbox{\lab{e}{\neg\big(\bigvee_{(\varphi, \psi, c') \in \mathit{Im}_O(c, e)} (\varphi \land \psi)\big)}} \falsefrm\ \text{.}
                                              \end{array}
\end{equation*}

\begin{example}
We show the first few steps of representing the operational ed
specification $\mathit{ATM}$ of \cref{fig:op-spec-ATM} as an
$\EDHL$-sentence $\varrho_{\mathit{ATM}}$.  This top-level sentence is
\bgroup\abovedisplayskip6pt\belowdisplayskip6pt
\begin{equation*}
  \hybind \mathit{Card} \,.\, \truefrm \land \mathrm{sen}(\begin{array}[t]{@{}l@{}}
                                                            \mathit{Card}, \{ (\truefrm, \mathsf{insertCard}, \mathsf{chk}' = \falseval \land \mathsf{trls}' = 0, \mathit{PIN}) \},\\
                                                            \{ \mathit{Card}, \mathit{PIN}, \mathit{Return} \}, \{ \mathit{Card} \})\ \text{.}
                                                          \end{array}
\end{equation*}
\egroup
The first call of $\mathrm{sen}(\mathit{Card}, \ldots)$ explores the
single outgoing transition from $\mathit{Card}$ to $\mathit{PIN}$, adds
$\mathit{PIN}$ to the bound states, and hence expands to
\bgroup\abovedisplayskip6pt\belowdisplayskip6pt
\begin{equation*}
  \hyat \mathit{Card} \,.\, \truefrm \limp \begin{array}[t]{@{}l@{}}
                                             \dldia{\lab{\mathsf{insertCard}}{\mathsf{chk}' = \falseval \land \mathsf{trls}' = 0}}\hybind \mathit{PIN} \,.\, \\
                                             \qquad\mathrm{sen}(\mathit{Card}, \emptyset, \{ \mathit{Card}, \mathit{PIN}, \mathit{Return} \}, \{ \mathit{Card}, \mathit{PIN} \})\ \text{.}
                                           \end{array}
\end{equation*}
\egroup
Now all outgoing transitions from $\mathit{Card}$ have been explored and
the next call of $\mathrm{sen}(\mathit{Card},\allowbreak \emptyset, \ldots)$
removes $\mathit{Card}$ from the set of states to be visited, resulting in
\bgroup\abovedisplayskip6pt\belowdisplayskip6pt
\begin{equation*}
  \mathrm{fin}(\mathit{Card}) \land \mathrm{sen}(\begin{array}[t]{@{}l@{}}
                                                   \mathit{PIN}, \{ \begin{array}[t]{@{}l@{}}
                                                                      (\mathsf{trls < 2}, \mathsf{enterPIN}, \ldots),
                                                                      (\mathsf{trls = 2}, \mathsf{enterPIN}, \ldots),\\
                                                                      (\mathsf{trls \leq 2}, \mathsf{enterPIN}, \ldots),
                                                                      (\truefrm, \mathsf{cancel}, \ldots) \},
                                                                    \end{array}\\
                                                   \{ \mathit{PIN}, \mathit{Return} \}, \{ \mathit{Card}, \mathit{PIN} \})\ \text{.}
                                                 \end{array}
\end{equation*}
\egroup
As there is only a single outgoing transition from $\mathit{Card}$, the special case of disjoint preconditions applies for the finalisation call, and $\mathrm{fin}(\mathit{Card})$ results in
\bgroup\abovedisplayskip6pt\belowdisplayskip6pt
\begin{equation*}
  \hyat \mathit{Card} \,.\, \begin{array}[t]{@{}l@{}}
                              \dlbox{\lab{\mathsf{insertCard}}{\mathsf{chk}' = \falseval \land \mathsf{trls}' = 0}} \mathit{PIN} \land{}\\
                              \dlbox{\lab{\mathsf{insertCard}}{\mathsf{chk}' = \trueval \lor \mathsf{trls}' \neq 0}}\falsefrm \land{}\\
                              \dlbox{\lab{\mathsf{enterPIN}}{\truefrm}}\falsefrm \land
                              \dlbox{\lab{\mathsf{cancel}}{\truefrm}}\falsefrm \land
                              \dlbox{\lab{\mathsf{ejectCard}}{\truefrm}}\falsefrm\ \text{.}
                            \end{array}
\end{equation*}
\egroup
\end{example}



\section{Constructor Implementations}\label{sec:constructor-impl}

The implementation notion defined in \cref{sec:axiomatic} is too simple
for many practical applications.  It requires the same signature for
specification and implementation and does not support the process of
constructing an implementation.  Therefore, Sannella and
Tarlecki~\cite{DBLP:journals/acta/SannellaT88,sannella-tarlecki:2012}
have proposed the notion of constructor implementation which is a
generic notion applicable to specification formalisms which are based on
signatures and semantic structures for signatures.  We will reuse the
ideas in the context of $\EDHL$"=logic.

The notion of \emph{constructor} is the basis: for signatures $\Sigma_1,
\ldots,\Sigma_n, \Sigma \in \EDHLSig$, a \emph{constructor} $\kappa$
from $(\Sigma_1, \ldots, \Sigma_n)$ to $\Sigma$ is a (total) function
$\kappa : \EDHLStr(\Sigma_1) \times \ldots \times \EDHLStr(\Sigma_n) \to
\EDHLStr(\Sigma)$.  Given a constructor $\kappa$ from $(\Sigma_1,
\ldots, \Sigma_n)$ to $\Sigma$ and a set of constructors $\kappa_i$ from
$(\Sigma_i^1, \ldots, \Sigma_i^{k_i})$ to $\Sigma_i$, $1\leq i \leq n$,
the constructor $(\kappa_1, \ldots, \kappa_n);\kappa$ from $(\Sigma_1^1,
\ldots, \Sigma_1^{k_1}, \ldots, \Sigma_n^1, \ldots, \Sigma_n^{k_n})$ to
$\Sigma$ is obtained by the usual composition of functions.  The
following definitions apply to both axiomatic and operational ed
specifications since the semantics of both is given in terms of ed
signatures and model classes of edts.  In particular, the implementation
notion allows to implement axiomatic specifications by operational
specifications.

\begin{definition}
\label{def:constructor-impl}
Given specifications $\Spec, \Spec_1, \ldots,\allowbreak \Spec_n$ and a
constructor $\kappa$ from $(\Sig(\Spec_1),\allowbreak \ldots,\allowbreak
\Sig(\Spec_n))$ to $\Sig(\Spec)$, the tuple $\langle\Spec_1, \ldots,
\Spec_n\rangle$ is a \emph{constructor implementation via} $\kappa$ of
$\Spec$, in symbols $\Spec \refinesto_\kappa \langle\Spec_1, \dots, \Spec_n\rangle$,
%
%
if for all $M_i \in \Mod(\Spec_i)$ we have $\kappa(M_1, \ldots ,M_n) \in
\Mod(\Spec).$ The implementation involves a
\emph{decomposition} if $n > 1$.
\end{definition}

The notion of simple implementation in \cref{sec:axiomatic} is captured
by choosing the identity.  We now introduce a set of more advanced
constructors in the context of ed signatures and edts.  Let us first
consider two central notions for constructors: signature morphisms and
reducts. For data signatures $A, A'$ a \emph{data signature morphism}
$\sigma : A \to A'$ is a function from $A$ to $A'$.  The
$\sigma$"=\emph{reduct} of an $A'$"=data state $\omega' : A' \to \Data$
is given by the $A$"=data state $\reduct{\omega'}{\sigma} : A \to \Data$
defined by $(\reduct{\omega'}{\sigma})(a) = \omega'(\sigma(a))$ for every
$a \in A$.  If $A \subseteq A'$, the injection of $A$ into $A'$ is a
particular data signature morphism and we denote the reduct of an
$A'$"=data state $\omega'$ to $A$ by $\restrict{\omega'}{A}$.  If $A =
A_1 \cup A_2$ is the disjoint union of $A_1$ and $A_2$ and $\omega_i$
are $A_i$"=data states for $i \in \{ 1, 2 \}$ then $\omega_1+\omega_2$
denotes the unique $A$"=data state $\omega$ with $\restrict{\omega}{A_i}
= \omega_i$ for $i \in \{ 1, 2 \}$.  The $\sigma$-reduct
$\reduct{\gamma}{\sigma}$ of a configuration $\gamma = (c, \omega')$ is
given by $(c, \reduct{\omega'}{\sigma})$, and is lifted to a set of
configurations $\Gamma'$ by $\reduct{\Gamma'}{\sigma} = \{
\reduct{\gamma'}{\sigma} \mid \gamma' \in \Gamma' \}$.

\begin{definition}
An \emph{ed signature morphism} $\sigma = (\sigma_{\Evt},
\sigma_{\Attr}) : \Sigma \to \Sigma'$ is given by a function
$\sigma_{\Evt} : \Evt(\Sigma) \to \Evt(\Sigma')$ and a data signature
morphism $\sigma_{\Attr} : \Attr(\Sigma) \to \Attr(\Sigma')$.  We
abbreviate both $\sigma_{\Evt}$ and $\sigma_{\Attr}$ by $\sigma$.
\end{definition}

\begin{definition}
Let $\sigma : \Sigma \to \Sigma'$ be an ed signature morphism and $M'$ a
$\Sigma'$-edts.  The $\sigma$-\emph{reduct} of $M'$ is the $\Sigma$-edts
$\reduct{M'}{\sigma} = (\Gamma, R, \Gamma_0)$ such that $\Gamma_0 =
\reduct{\Conf_0(M')}{\sigma}$, and $\Gamma$ and $R = (R_e)_{e \in
  E(\Sigma)}$ are inductively defined by $\Gamma_0 \subseteq \Gamma$ and
for all $e \in \Evt(\Sigma)$, $\gamma', \gamma'' \in \Conf(M')$:
 if
$\reduct{\gamma'}{\sigma} \in \Gamma$ and $(\gamma', \gamma'') \in
\Rel(M')_{\sigma(e)}$, then $\reduct{\gamma''}{\sigma} \in \Gamma$ and
$(\reduct{\gamma'}{\sigma}, \reduct{\gamma''}{\sigma}) \in R_e$.
\end{definition}

\begin{definition}
\label{def:alphabet-extension}
Let $\sigma : \Sigma \to \Sigma'$ be an ed signature morphism.  The
\emph{reduct constructor} $\kappa_{\sigma}$ from $\Sigma'$ to $\Sigma$
maps any $M' \in \EDHLStr(\Sigma')$ to its reduct $\kappa_{\sigma}(M') =
\reduct{M'}{\sigma}$. Whenever $\sigma_{\Attr}$ and $\sigma_{\Evt}$ are
bijective functions, $\kappa_\sigma$ is a \emph{ relabelling
  constructor}. If $\sigma_{\Evt}$ and $\sigma_{\Attr}$ are injective,
$\kappa_\sigma$ is a \emph{restriction constructor}.
\end{definition}

\begin{example}\label{ex:atm-justify}
The operational specification $\mathit{ATM}$ is a constructor implementation
of $\Spec_1$ via the restriction constructor $\kappa_{\iota}$ determined by the inclusion
signature morphism $\iota: \Sig(\Spec_1) \to \Sig(\mathit{ATM})$, i.e.,
$\Spec_1 \refinesto_{\kappa_\iota} \mathit{ATM}$.
\end{example}

A further refinement technique for reactive systems (see, e.g.,
\cite{actionrefinement}), is the implementation of simple events by
complex events, like their sequential composition.  To formalise this as
a constructor we use \emph{composite events} $\Ecomp(E)$ over a given
set of events $E$, given by the grammar $\theta ::= e \mid \theta +
\theta \mid \theta ;\theta \mid \theta^*$ with $e \in E$.  They are
\emph{interpreted} over an $(E, A)$-edts $M$ by $\Rel(M)_{\theta_1 +
  \theta_2} = \Rel(M)_{\theta_1} \cup \Rel(M)_{\theta_2}$,
$\Rel(M)_{\theta_1; \theta_2} = \Rel(M)_{\theta_1}; \Rel(M)_{\theta_2}$,
and $\Rel(M)_{\theta^*} = (\Rel(M)_{\theta})^*$.  Then we can introduce
the intended constructor by means of reducts over signature morphisms
mapping atomic to composite events:

\begin{definition}
\label{def:action-ref}
Let $\Sigma, \Sigma'$ be ed signatures, $D'$ a finite subset of
$\Ecomp(\Evt(\Sigma'))$, $\Delta' = (D',\allowbreak \Attr(\Sigma'))$,
and $\alpha : \Sigma \to \Delta'$ an ed signature morphism.  The
\emph{event refinement} constructor $\kappa_\alpha$
from $\Delta'$ to
$\Sigma$ maps any $M' \in \EDHLStr(\Delta')$ to its reduct
$\reduct{M'}{\alpha} \in \EDHLStr(\Sigma)$.
\end{definition}

Finally, we consider a semantic, synchronous parallel composition
constructor that allows for decomposition of implementations into
components which synchronise on shared events.  Given two composable
signatures $\Sigma_1$ and $\Sigma_2$, the \emph{parallel composition}
$\gamma_1 \otimes \gamma_2$ of two configurations $\gamma_1 = (c_1,
\omega_1)$, $\gamma_2 = (c_2, \omega_2)$ with $\omega_1 \in
\DataSt(\Attr(\Sigma_1))$, $\omega_2 \in \DataSt(\Attr(\Sigma_2))$ is
given by $((c_1, c_2), \omega_1 + \omega_2)$, and lifted to two sets of
configurations $\Gamma_1$ and $\Gamma_2$ by $\Gamma_1 \otimes \Gamma_2 =
\{ \gamma_1 \otimes \gamma_2 \mid \gamma_1 \in \Gamma_1,\ \gamma_2 \in
\Gamma_2 \}$.

\begin{definition}
Let $\Sigma_1, \Sigma_2$ be composable ed signatures.  The
\emph{parallel composition constructor} $\kappa_\otimes$ from
$(\Sigma_1, \Sigma_2)$ to $\Sigma_1 \otimes \Sigma_2$ maps any $M_1 \in
\EDHLStr(\Sigma_1)$, $M_2 \in \EDHLStr(\Sigma_2)$ to $M_1 \otimes M_2 =
(\Gamma, R, \Gamma_0) \in \EDHLStr(\Sigma_1 \otimes \Sigma_2)$, where
$\Gamma_0 = \Conf_0(M_1) \otimes \Conf_0(M_2)$, and $\Gamma$ and $R =
(R_e)_{\Evt(\Sigma_1) \cup \Evt(\Sigma_2)}$ are inductively defined by
$\Gamma_0 \subseteq \Gamma$ and
\begin{itemize}[leftmargin=*, topsep=2pt]
  \item for all $e_1 \in \Evt(\Sigma_1) \setminus \Evt(\Sigma_2)$,
$\gamma_1, \gamma_1' \in \Conf(M_1)$, and $\gamma_2 \in \Conf(M_2)$, if
$\gamma_1 \otimes \gamma_2 \in \Gamma$ and $(\gamma_1, \gamma_1') \in
\Rel(M_1)_{e_1}$, then $\gamma_1' \otimes \gamma_2 \in \Gamma$ and
$(\gamma_1 \otimes \gamma_2, \gamma_1' \otimes \gamma_2) \in R_{e_1}$;

  \item for all $e_2 \in \Evt(\Sigma_2) \setminus \Evt(\Sigma_1)$,
$\gamma_2, \gamma_2' \in \Conf(M_2)$, and $\gamma_1 \in \Conf(M_1)$, if
$\gamma_1 \otimes \gamma_2 \in \Gamma$ and $(\gamma_2, \gamma_2') \in
\Rel(M_2)_{e_2}$, then $\gamma_1 \otimes \gamma_2' \in \Gamma$ and
$(\gamma_1 \otimes \gamma_2, \gamma_1 \otimes \gamma_2') \in R_{e_2}$;

  \item for all $e \in \Evt(\Sigma_1) \cap \Evt(\Sigma_2)$, $\gamma_1,
\gamma_1' \in \Conf(M_1)$, and $\gamma_2, \gamma_2' \in \Conf(M_2)$,
if $\gamma_1 \otimes \gamma_2 \in \Gamma$, $(\gamma_1, \gamma_1') \in
\Rel(M_1)_{e_1}$, and $(\gamma_2, \gamma_2') \in \Rel(M_2)_{e_2}$, then
$\gamma_1' \otimes \gamma_2' \in \Gamma$ and $(\gamma_1 \otimes
\gamma_2, \gamma_1' \otimes \gamma_2') \in R_{e}$.
\end{itemize}
\end{definition}

An obvious question is how the semantic parallel composition constructor is
related to the syntactic parallel composition of operational ed specifications.

\begin{proposition}\label{prop:op-ed-spec-parallel}
Let $O_1, O_2$ be operational ed specifications with composable signatures. 
Then $\Mod(O_1) \otimes \Mod(O_2)
\subseteq \Mod(O_1 \parallel O_2)$,
where $\Mod(O_1) \otimes \Mod(O_2)$ denotes $\kappa_\otimes(\Mod(O_1),\Mod(O_2))$.
\end{proposition}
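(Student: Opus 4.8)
The plan is to fix arbitrary $M_1 \in \Mod(O_1)$ and $M_2 \in \Mod(O_2)$, put $M = \kappa_\otimes(M_1, M_2) = M_1 \otimes M_2$ and $O = O_1 \parallel O_2$, and verify the four defining clauses of ``$M \in \Mod(O)$'' one by one. The workhorse is a \emph{decomposition lemma}: every $\gamma \in \Conf(M)$ is of the form $\gamma_1 \otimes \gamma_2$ with $\gamma_i \in \Conf(M_i)$. This follows by induction on the inductive generation of $\Conf(M)$ in the definition of $\kappa_\otimes$: the base case $\Conf_0(M) = \Conf_0(M_1) \otimes \Conf_0(M_2)$ decomposes, and each of the three generation clauses only combines an already decomposed configuration with a transition of $M_1$ or $M_2$, whose endpoints lie in $\Conf(M_1)$, resp.\ $\Conf(M_2)$. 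Throughout I use the intended convention that a predicate in $\stFrm(\Sigma_i)$ holds in an $\Attr(\Sigma)$-data state $\omega$ iff it holds in $\restrict{\omega}{\Attr(\Sigma_i)}$, and similarly for transition predicates; since the attribute sets are disjoint this gives $\restrict{(\omega_1 + \omega_2)}{\Attr(\Sigma_i)} = \omega_i$, and the conjunct $\id{\Attr(\Sigma_{3-i})}$ introduced by the $\parallel$-construction expresses precisely that the component-$i$ data is left unchanged. With this the initial-state clauses are routine: $\ctrlst_0(M) = (\ctrlst_0(M_1), \ctrlst_0(M_2)) = (\ctrlst_0(O_1), \ctrlst_0(O_2)) = \ctrlst_0(O)$, and any $\omega_1 + \omega_2 \in \DataSt_0(M)$ satisfies $\initpred(O_1) \land \initpred(O_2) = \initpred(O)$, because $\omega_i \datamodels{\Attr(\Sigma_i)} \initpred(O_i)$ by $M_i \in \Mod(O_i)$.

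For the two transition clauses I would argue by cases on which of the three generation clauses is responsible for the transition under consideration. For ``no superfluous transitions'', given $(\gamma, \gamma') \in \Rel(M)_e$ produced, say, by the shared-event clause of $\kappa_\otimes$, decompose $\gamma = \gamma_1 \otimes \gamma_2$ and $\gamma' = \gamma_1' \otimes \gamma_2'$, extract via the second model clauses of $M_1$ and of $M_2$ transition specifications $(c_i, \varphi_i, e, \psi_i, c_i') \in \Trans(O_i)$ whose precondition and effect predicate are satisfied by $\gamma_i, \gamma_i'$, and glue them by the shared-event clause of $O_1 \parallel O_2$ into $((c_1,c_2), \varphi_1 \land \varphi_2, e, \psi_1 \land \psi_2, (c_1',c_2')) \in \Trans(O)$, which witnesses the claim; the two private-event clauses are analogous, an identity conjunct taking the place of the absent component transition. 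There is a harmless interdependency here: the $\parallel$-clauses that deliver this specification presuppose that the source control pair already lies in $\CtrlSt(O)$, so this clause and the inclusion ``every control pair occurring in $\Conf(M)$ lies in $\CtrlSt(O)$'' are to be established together, by one induction on reachability in $M$. The ``required transitions exist'' clause is dual: from a specification in $\Trans(O)$ and a $((c_1,c_2),\omega) \in \Conf(M)$ meeting its precondition, decompose $\omega = \omega_1 + \omega_2$ by the lemma, note that $(c_i,\omega_i) \in \Conf(M_i)$ with $\omega_i$ satisfying the component precondition, invoke the first model clause of $M_1$ and/or $M_2$, and reassemble the resulting transition(s) through the matching $\kappa_\otimes$-clause, the effect predicate again holding by disjointness.

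The main obstacle is the remaining half of the control-state clause: \emph{every} syntactically reachable control pair of $O = O_1 \parallel O_2$ must actually be reached by some configuration of $M = M_1 \otimes M_2$, for only together with the inclusion of the previous paragraph does this yield $\CtrlSt(M) = \CtrlSt(O)$ up to bijective renaming. The intended route is to lift a syntactic run of $O$ witnessing $(c_1,c_2) \in \CtrlSt(O)$ to a genuine run of $M$, by induction on its length; the delicate point is that such a run prescribes concrete transition specifications, while the data state reached so far at an intermediate pair need not satisfy the precondition of the next prescribed specification, so that step cannot be replayed verbatim. To get around this I would use that $M_i \in \Mod(O_i)$ forces $\CtrlSt(M_i) = \CtrlSt(O_i)$, so each component target is genuinely reachable in the corresponding $M_i$, together with the fact that transitions of $M$ labelled by events private to one component commute with those labelled by events private to the other: one can then interleave a run of $M_1$ reaching $c_1$ with a run of $M_2$ reaching $c_2$, carrying out the private preparations of each component before the next shared event and synchronising only on shared events, whose preconditions can be arranged to hold simultaneously precisely because $O_1 \parallel O_2$ already synchronises those events. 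Making this interleaving watertight — in particular keeping the shared-event subsequences of the two component runs aligned along the syntactic $O$-run — is the step I expect to demand the most careful bookkeeping.
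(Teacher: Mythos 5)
Your first two paragraphs are, in substance, exactly the paper's proof. It fixes $M_i \in \Mod(O_i)$, sets $M = M_1 \otimes M_2$ and $O = O_1 \parallel O_2$, and verifies the two transition clauses of model-hood by your case split into events private to $O_1$, private to $O_2$, and shared, reading configurations of $M$ as pairs $\gamma_1 \otimes \gamma_2$ and exploiting disjointness of the attribute sets to pass between $\omega$ and its restrictions (with $\id{\Attr(\Sigma_{3-i})}$ absorbing the idle component). It also runs precisely your two inductions --- over syntactic reachability of $\CtrlSt(O)$ for the ``required transitions exist'' clause, and over reachability of $\Conf(M)$ for the ``no superfluous transitions'' clause --- carrying membership of the source control pair (in $\Conf(M)$, resp.\ in $\CtrlSt(O)$) along as the induction hypothesis; this is exactly the ``harmless interdependency'' you describe. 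The initial-state clauses are treated as immediate, as you do.

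The paper does \emph{not} attempt anything like your third paragraph, and that is where your plan breaks. Commuting private moves is indeed unproblematic, but the claim that the shared-event preconditions ``can be arranged to hold simultaneously'' is unfounded: the data pair that $M_1 \otimes M_2$ actually carries into a control pair $(c_1,c_2)$ is constrained by the \emph{joint} history, and need not satisfy the conjoined precondition of the next shared transition of the prescribed syntactic run, even though each $M_i$ separately realises that transition from \emph{some} data state at $c_i$. Concretely, let both components loop on a shared event $e$, with $O_1$ having $(c_0,\, a \in \NZ,\, e,\, a' = a+1,\, c_0)$ and initial predicate $a = 0$, and $O_2$ having $(d_0,\, b \in 2\NZ,\, e,\, b' = b+2,\, d_0)$ and initial predicate $b = 0$. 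Their (essentially unique) models reach all of $\NZ$, resp.\ $2\NZ$, but $M_1 \otimes M_2$ reaches only the diagonal data states $(n, 2n)$; there are no private events to interleave, so no reordering helps. Guarding a fresh target control state in each component by an off-diagonal data condition (say $a = 1$ and $b = 4$) then yields a pair in $\CtrlSt(O_1 \parallel O_2)$ that no configuration of $M_1 \otimes M_2$ realises, so the lifting of syntactic $O$-runs to runs of $M$ that you are after simply does not exist. The only tenable reading --- and the one the paper's proof implicitly adopts --- is that the model conditions are checked along the configurations that the two inductions actually produce, rather than over all data states satisfying a precondition and all syntactically reachable control pairs. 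Under that reading your third paragraph should be dropped entirely; under the stricter literal reading it is not a bookkeeping issue but a genuinely false step.
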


The converse $\Mod(O_1 \parallel O_2) \subseteq \Mod(O_1) \otimes
\Mod(O_2)$ does not hold: Consider the ed signature $\Sigma = (E, A)$
with $E = \{ e \}$, $A = \emptyset$, and the operational ed
specifications $O_i = (\Sigma,\allowbreak C_i,\allowbreak
T_i,\allowbreak (c_{i, 0},\allowbreak \varphi_{i, 0}))$ for $i \in \{ 1,
2 \}$ with $C_1 = \{ c_{1, 0} \}$, $T_1 = \{ (c_{1, 0},\allowbreak
\truefrm,\allowbreak e, \falsefrm,\allowbreak c_{1, 0}) \}$,
$\varphi_{1, 0} = \truefrm$; and $C_2 = \{ c_{2, 0} \}$, $T_2 =
\emptyset$, $\varphi_{2, 0} = \truefrm$.  Then $\Mod(O_1) = \emptyset$,
but $\Mod(O_1 \parallel O_2) = \{ M \}$ with $M$ showing just the
initial configuration.

The next theorem shows the usefulness of the syntactic parallel
composition operator for proving implementation correctness when a
(semantic) parallel composition constructor is involved. The theorem is
a direct consequence of~\cref{prop:op-ed-spec-parallel}
and~\cref{def:constructor-impl}.

\begin{theorem}\label{thm:par}
Let $\Spec$ be an (axiomatic or operational) ed specification, $O_1, O_2$
operational ed specifications with composable signatures, and $\kappa$
an implementation constructor from $\Sig(O_1) \otimes \Sig(O_2)$ to
$\Sig(\Spec)$: If $\Spec \refinesto_\kappa O_1 \parallel O_2$, then
$\Spec \refinesto_{\kappa_\otimes; \kappa} \langle O_1, O_2 \rangle$.
\end{theorem}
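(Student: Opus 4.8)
The plan is to unfold both constructor implementation relations according to \cref{def:constructor-impl} and to insert \cref{prop:op-ed-spec-parallel} at the single point where semantic and syntactic parallel composition meet. Recall that the composite constructor $\kappa_\otimes; \kappa$ from $(\Sig(O_1), \Sig(O_2))$ to $\Sig(\Spec)$ is, by definition of constructor composition, the function sending $(M_1, M_2)$ to $\kappa(\kappa_\otimes(M_1, M_2)) = \kappa(M_1 \otimes M_2)$; it is well defined because composability of $\Sig(O_1)$ and $\Sig(O_2)$ makes $\Sig(O_1) \otimes \Sig(O_2)$ an ed signature and $\kappa$ is assumed to be a constructor from that signature to $\Sig(\Spec)$.

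First I would fix arbitrary models $M_1 \in \Mod(O_1)$ and $M_2 \in \Mod(O_2)$ and form $M_1 \otimes M_2 = \kappa_\otimes(M_1, M_2) \in \EDHLStr(\Sig(O_1) \otimes \Sig(O_2))$. By \cref{prop:op-ed-spec-parallel} we have $\Mod(O_1) \otimes \Mod(O_2) \subseteq \Mod(O_1 \parallel O_2)$, hence $M_1 \otimes M_2 \in \Mod(O_1 \parallel O_2)$. Next I would invoke the hypothesis $\Spec \refinesto_\kappa O_1 \parallel O_2$, which by \cref{def:constructor-impl} means that $\kappa(M) \in \Mod(\Spec)$ for every $M \in \Mod(O_1 \parallel O_2)$; instantiating $M := M_1 \otimes M_2$ gives $(\kappa_\otimes; \kappa)(M_1, M_2) = \kappa(M_1 \otimes M_2) \in \Mod(\Spec)$. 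Since $M_1$ and $M_2$ were arbitrary, this is precisely the statement $\Spec \refinesto_{\kappa_\otimes; \kappa} \langle O_1, O_2 \rangle$, which finishes the argument.

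I do not expect any real obstacle in this argument itself, since it is merely bookkeeping with the definitions; the mathematical content sits entirely in \cref{prop:op-ed-spec-parallel}, whose proof is where one has to verify that the semantic product $M_1 \otimes M_2$ of two models satisfies both model conditions of the syntactically composed specification $O_1 \parallel O_2$ — namely that every transition prescribed by the composed transition relation specification is realised and that no spurious transitions occur — while respecting the reachability constraint on edts and using composability to rule out inconsistent joint moves. The one subtlety to keep in mind here is to use only the ``$\subseteq$'' direction of \cref{prop:op-ed-spec-parallel}: its converse fails (as the one-event counterexample with the inconsistent transition predicate $\falsefrm$ shows), so the argument must not silently appeal to an equality of model classes.
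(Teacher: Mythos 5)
Your argument is correct and is essentially identical to the paper's own proof: both fix $M_1 \in \Mod(O_1)$, $M_2 \in \Mod(O_2)$, use \cref{prop:op-ed-spec-parallel} to place $M_1 \otimes M_2$ in $\Mod(O_1 \parallel O_2)$, and then apply the hypothesis $\Spec \refinesto_\kappa O_1 \parallel O_2$ to conclude $\kappa(\kappa_\otimes(M_1,M_2)) \in \Mod(\Spec)$. Your closing remark that only the $\subseteq$ direction of the proposition is needed (and that the converse fails) is accurate and matches the paper's discussion.
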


\begin{figure}[!htp]\centering
\subfloat[Operational ed specification $\mathit{ATM}'$]{%
\label{fig:op-spec-ATM-ref}%
\begin{tikzpicture}[scale=0.85, transform shape]
\tikzset{
  every node/.append style={align=left},
  every state/.append style={rectangle, rounded corners, minimum width=1.5cm},
}
\node[state, initial text={\fontsize{8pt}{8pt}\selectfont$\truefrm$}, initial left] (Card) {$\mathit{Card}$};
\node[state, right of=Card, node distance=6.0cm] (PIN) {$\mathit{PIN}$};
\node[state] at (1, -2.0) (Return) {$\mathit{Return}$};
\node[state, below of=Card, node distance=5.0cm] (Verifying) {$\mathit{Verifying}$};
\node[state, right of=Verifying, node distance=6.0cm] (PINEntered) {$\mathit{PINEntered}$};
\path[->,font={\fontsize{8pt}{8pt}\selectfont}] 
  (Card) edge[] node[anchor=base, above] {$\mathsf{insertCard} \nfatslash{}$\\ $\mathsf{chk}' = \falseval \land \mathsf{trls}' = 0$} (PIN)
  (PIN) edge[bend left=20] node[anchor=south east, pos=.25, yshift=0pt] {$\mathsf{cancel} \nfatslash{}$\\ $\mathsf{chk}' = \falseval \land $\\$\mathsf{trls}' = \mathsf{trls}$} (Return)
  (PIN) edge node[right] {$\mathsf{trls} \leq 2 \limp{}$\\ $\mathsf{enterPIN} \nfatslash{}$\\ $\mathsf{chk}' = \mathsf{chk} \land{}$\\ $\mathsf{trls}' = \mathsf{trls}$} (PINEntered)
  (PINEntered) edge node[below] {$\mathsf{trls} \leq 2 \limp \mathsf{verifyPIN} \nfatslash{}$\\ $\mathsf{chk}' = \mathsf{chk} \land \mathsf{trls}' = \mathsf{trls}$} (Verifying)
  (Verifying) edge[bend right=25] node[anchor=north west, pos=.5] {$\mathsf{trls} < 2 \limp{}$\\ $\mathsf{wrongPIN} \nfatslash{}$\\ $\mathsf{chk}' = \falseval \land{}$\\ $\mathsf{trls}' = \mathsf{trls}+1$} (PIN)
  (Verifying) edge node[anchor=north west, pos=.9] {$\mathsf{trls} \leq 2 \limp{}$\\ $\mathsf{correctPIN} \nfatslash{}$\\ $\mathsf{chk}' = \trueval \land{}$\\ $\mathsf{trls}' = \mathsf{trls}+1$} (Return)
  (Verifying) edge[bend left=10] node[anchor=east] {$\mathsf{trls} = 2 \limp{}$\\ $\mathsf{wrongPIN} \nfatslash{}$\\ $\mathsf{chk}' = \falseval \land{}$\\ $\mathsf{trls}' = \mathsf{trls}+1$} (Card)
  (Return) edge[] node[anchor=south west, pos=.2] {$\mathsf{ejectCard} \nfatslash{}$\\ $\mathsf{chk}' = \mathsf{chk} \land{}$\\ $\mathsf{trls}' = \mathsf{trls}$} (Card)
;
\end{tikzpicture}}

\subfloat[Operational specification $\mathit{CC}$ of a clearing company]{%
\label{fig:op-spec-CC}%
\begin{tikzpicture}[scale=.85, transform shape]
\tikzset{
  every node/.append style={align=left},
  every state/.append style={rectangle, rounded corners, minimum width=1.5cm},
}
\node[state, initial text={\fontsize{8pt}{8pt}\selectfont$\mathsf{cnt} = 0$}, initial left] (Idle) {$\mathit{Idle}$};
\node[state, right of=Idle, node distance=6.0cm] (Busy) {$\mathit{Busy}$};
\path[->,font={\fontsize{8pt}{8pt}\selectfont}] 
  (Idle) edge[bend left=15] node[anchor=base, above] {$\mathsf{verifyPIN} \nfatslash \mathsf{cnt}' = \mathsf{cnt}$} (Busy)
  (Busy) edge[bend left=4] node[anchor=base, above] {$\mathsf{correctPIN} \nfatslash \mathsf{cnt}' = \mathsf{cnt}+1$} (Idle)
  (Busy) edge[bend left=12] node[anchor=top, below] {$\mathsf{wrongPIN} \nfatslash \mathsf{cnt}' = \mathsf{cnt}+1$} (Idle)
;
\end{tikzpicture}}

\subfloat[Syntactic parallel composition $\mathit{ATM}' \parallel \mathit{CC}$]{%
\label{fig:op-spec-ATM-CC-comp}%
\begin{tikzpicture}[scale=0.85, transform shape]
\tikzset{
  every node/.append style={align=left},
  every state/.append style={rectangle, rounded corners, minimum width=1.5cm},
}
\node[state, initial text={\fontsize{8pt}{8pt}\selectfont$\mathsf{cnt} = 0$}, initial left] (CardIdle) {$\mathit{Card}, \mathit{Idle}$};
\node[state, right of=CardIdle, node distance=7.5cm] (PINIdle) {$\mathit{PIN}, \mathit{Idle}$};
\node[state] at (1.3, -2.5) (ReturnIdle) {$\mathit{Return}, \mathit{Idle}$};
\node[state, below of=CardIdle, node distance=5.5cm] (VerifyingBusy) {$\mathit{Verifying}, \mathit{Busy}$};
\node[state, right of=VerifyingBusy, node distance=7.5cm] (PINEnteredIdle) {$\mathit{PINEntered}, \mathit{Idle}$};
\path[->,font={\fontsize{8pt}{8pt}\selectfont}] 
  (CardIdle) edge[] node[anchor=base, above] {$\mathsf{insertCard} \nfatslash{}$\\ $\mathsf{chk}' = \falseval \land \mathsf{trls}' = 0 \land \mathsf{cnt} = \mathsf{cnt}'$} (PINIdle)
  (PINIdle) edge[bend left=20] node[anchor=south east, pos=.3, yshift=0pt] {$\mathsf{cancel} \nfatslash{}$\\ $\mathsf{chk}' = \falseval \land $\\$\mathsf{trls}' = \mathsf{trls}$\\ $\mathsf{cnt}' = \mathsf{cnt}$} (ReturnIdle)
  (PINIdle) edge node[right] {$\mathsf{trls} \leq 2 \limp{}$\\ $\mathsf{enterPIN} \nfatslash{}$\\ $\mathsf{chk}' = \mathsf{chk} \land{}$\\ $\mathsf{trls}' = \mathsf{trls} \land{}$\\ $\mathsf{cnt}' = \mathsf{cnt}$} (PINEnteredIdle)
  (PINEnteredIdle) edge node[below] {$\mathsf{trls} \leq 2 \limp \mathsf{verifyPIN} \nfatslash{}$\\ $\mathsf{chk}' = \mathsf{chk} \land \mathsf{trls}' = \mathsf{trls} \land \mathsf{cnt} = \mathsf{cnt}'$} (VerifyingBusy)
  (VerifyingBusy) edge[bend right=25] node[anchor=north west, pos=.52, yshift=2pt] {$\mathsf{trls} < 2 \limp{}$\\ $\mathsf{wrongPIN} \nfatslash{}$\\ $\mathsf{chk}' = \falseval \land{}$\\ $\mathsf{trls}' = \mathsf{trls}+1 \land{}$\\ $\mathsf{cnt}' = \mathsf{cnt}+1$} (PINIdle)
  (VerifyingBusy) edge node[anchor=north west, pos=.94, xshift=-1pt] {$\mathsf{trls} \leq 2 \limp{}$\\ $\mathsf{correctPIN} \nfatslash{}$\\ $\mathsf{chk}' = \trueval \land{}$\\ $\mathsf{trls}' = \mathsf{trls}+1 \land{}$\\ $\mathsf{cnt}' = \mathsf{cnt}+1$} (ReturnIdle)
  (VerifyingBusy) edge[bend left=10] node[anchor=east] {$\mathsf{trls} = 2 \limp{}$\\ $\mathsf{wrongPIN} \nfatslash{}$\\ $\mathsf{chk}' = \falseval \land{}$\\ $\mathsf{trls}' = \mathsf{trls}+1 \land{}$\\ $\mathsf{cnt}' = \mathsf{cnt}+1$} (CardIdle)
  (ReturnIdle) edge[] node[anchor=south west, pos=.2] {$\mathsf{ejectCard} \nfatslash{}$\\ $\mathsf{chk}' = \mathsf{chk} \land{}$\\ $\mathsf{trls}' = \mathsf{trls} \land{}$\\ $\mathsf{cnt}' = \mathsf{cnt}$} (CardIdle)
;
\end{tikzpicture}}
\caption{Operational ed specifications $\mathit{ATM}'$, $\mathit{CC}$
  and their parallel composition}
\end{figure}
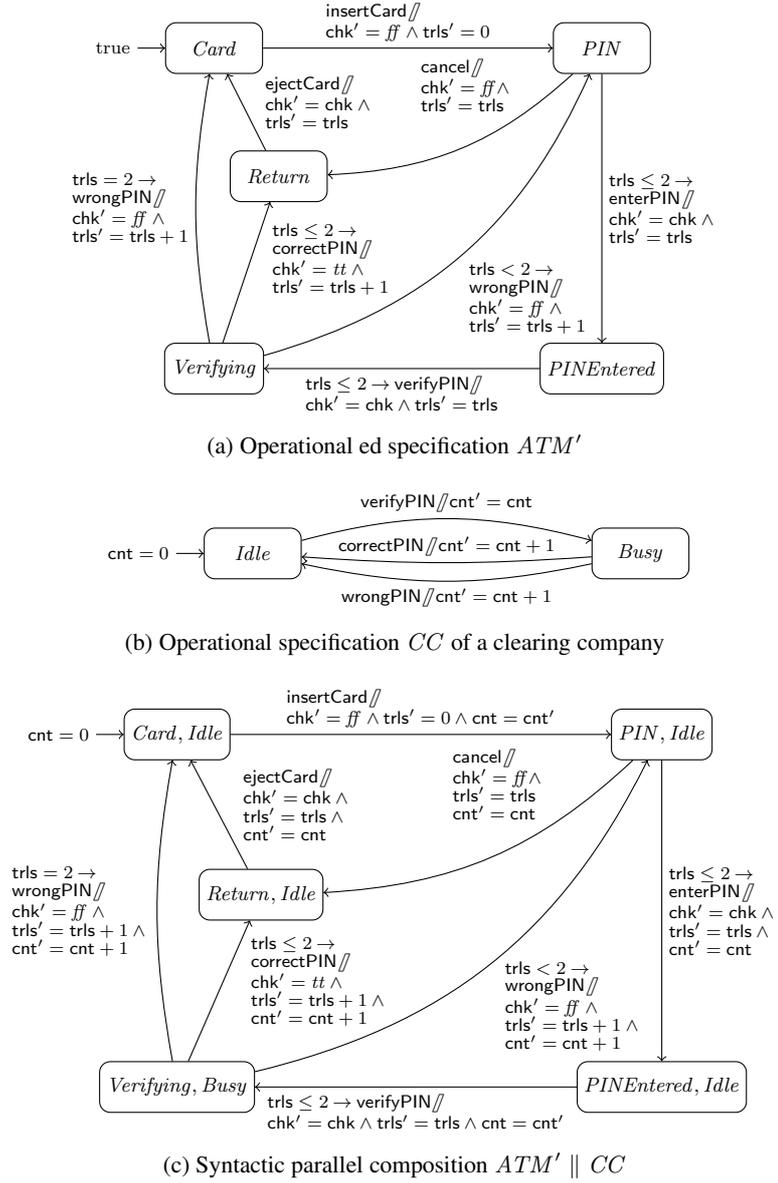
\begin{example}
We finish the refinement chain for the ATM specifications by applying a
decomposition into two parallel components.  The operational
specification $\mathit{ATM}$ of \cref{ex:atm} (and
\cref{ex:atm-justify}) describes the interface behaviour of an ATM
interacting with a user.  For a concrete realisation, however, an ATM
will also interact internally with other components, like, e.g., a
clearing company which supports the ATM for verifying PINs.  Our last
refinement step hence realises the $\mathit{ATM}$ specification by two
parallel components, represented by the operational specification
$\mathit{ATM}'$ in \cref{fig:op-spec-ATM-ref} and the operational
specification $\mathit{CC}$ of a clearing company
in~\cref{fig:op-spec-CC}.  Both communicate (via shared events) when an
ATM sends a verification request, modelled by the event
$\mathsf{verifyPIN}$, to the clearing company.  The clearing company may
answer with $\mathsf{correctPIN}$ or $\mathsf{wrongPIN}$ and then the
ATM continues following its specification.  For the implementation
construction we use the parallel composition constructor
$\kappa_\otimes$ from $(\Sig(\mathit{ATM}'),\Sig(\mathit{CC}))$ to
$\Sig(\mathit{ATM}') \otimes \Sig(\mathit{CC})$.  The signature of
$\mathit{CC}$ consists of the events shown on the transitions
in~\cref{fig:op-spec-CC}. Moreover, there is one integer-valued
attribute $\mathsf{cnt}$ counting the number of verification tasks
performed.  The signature of $\mathit{ATM}'$ extends
$\Sig(\mathit{ATM})$ by the events $\mathsf{verifyPIN}$,
$\mathsf{correctPIN}$ and $\mathsf{wrongPIN}$.  To fit the signature and
the behaviour of the parallel composition of $\mathit{ATM}'$ and
$\mathit{CC}$ to the specification $\mathit{ATM}$ we must therefore
compose $\kappa_\otimes$ with an event refinement constructor
$\kappa_\alpha$ such that $\alpha(\mathsf{enterPIN}) =
(\mathsf{enterPIN}; \mathsf{verifyPIN}; (\mathsf{correctPIN} +
\mathsf{wrongPIN}))$; for the other events $\alpha$ is the identity and
for the attributes the inclusion.  The idea is therefore that the
refinement looks like $\mathit{ATM}
\refinesto_{\kappa_\otimes;\,\kappa_\alpha} \langle\mathit{ATM}',
\mathit{CC}\rangle$.  To prove this refinement relation we rely on the
syntactic parallel composition $\mathit{ATM}' \parallel \mathit{CC}$
shown in \cref{fig:op-spec-ATM-CC-comp}, and on \cref{thm:par}.  It is
easy to see that $\mathit{ATM} \refinesto_{\kappa_\alpha}
\mathit{ATM}'\parallel\mathit{CC}$. In fact, all transitions for event
$\mathsf{enterPIN}$ in \cref{fig:op-spec-ATM} are split into several
transitions in \cref{fig:op-spec-ATM-CC-comp} according to the event
refinement defined by $\alpha$.  For instance, the loop transition from
$\mathit{PIN}$ to $\mathit{PIN}$ with precondition $\mathsf{trls < 2}$
in \cref{fig:op-spec-ATM} is split into the cycle from $(\mathit{PIN},
\mathit{Idle})$ via $(\mathit{PINEntered}, \mathit{Idle})$ and
$(\mathit{Verifying}, \mathit{Busy})$ back to $(\mathit{PIN},
\mathit{Idle})$ in \cref{fig:op-spec-ATM-CC-comp}.  Thus, we have
$\mathit{ATM} \refinesto_{\kappa_\alpha}
\mathit{ATM}'\parallel\mathit{CC}$ and can apply \cref{thm:par} such
that we get $\mathit{ATM} \refinesto_{\kappa_\otimes;\,\kappa_\alpha}
\langle\mathit{ATM}', \mathit{CC}\rangle$.
\end{example}



\section{Conclusions}\label{sec:conclusions}

We have presented a novel logic, called $\EDHL$"=logic, for the rigorous
formal development of event-based systems incorporating changing data
states. To the best of our knowledge, no other logic supports the full
development process for this kind of systems ranging from abstract
requirements specifications, expressible by the dynamic logic features,
to the concrete specification of implementations, expressible by the
hybrid part of the logic.

The temporal logic of actions (TLA~\cite{lamport:2003}) supports also
stepwise refinement where state transition predicates are considered as
actions.  In contrast to TLA we model also the events which cause data
state transitions.  For writing concrete specifications we have proposed
an operational specification format capturing (at least parts of)
similar formalisms, like Event-B~\cite{abrial:2013}, symbolic transition
systems~\cite{poizat-royer:jucs:2006}, and UML protocol state
machines~\cite{uml-2.5}. A significant difference to Event-B machines is
that we distinguish between control and data states, the former being
encoded as data in Event-B. On the other hand, Event-B supports
parameters of events which could be integrated in our logic as well.  An
institution-based semantics of Event-B has been proposed
in~\cite{farrell-monahan-power:wadt:2016} which coincides with our
semantics of operational specifications for the special case of
deterministic state transition predicates. Similarly, our semantics of
operational specifications coincides with the unfolding of symbolic
transition systems in~\cite{poizat-royer:jucs:2006} if we instantiate
our generic data domain with algebraic specifications of data types (and
consider again only deterministic state transition predicates).  The
syntax of UML protocol state machines is about the same as the one of
operational event/data specifications.  As a consequence, all of the
aforementioned concrete specification formalisms (and several others)
would be appropriate candidates for integration into a development
process based on $\EDHL$"=logic.

There remain several interesting tasks for future research.  First, our
logic is not yet equipped with a proof system for deriving consequences
of specifications.  This would also support the proof of refinement
steps which is currently achieved by purely semantic reasoning.  A proof
system for $\EDHL$"=logic must cover dynamic and hybrid logic parts at
the same time, like the proof system in~\cite{madeira-et-al:tcs:2018},
which, however, does not consider data states, and the recent calculus
of~\cite{bohrer-platzer:lics:2018}, which extends differential dynamic
logic but does not deal with events and reactions to events.  Both proof
systems could be appropriate candidates for incorporating the features
of $\EDHL$"=logic.  Another issue concerns the separation of events into
input and output as in I/O-automata~\cite{lynch:concur:2003}.  Then also
communication compatibility (see~\cite{alfaro-henzinger:sigsoft:2001}
for interface automata without data
and~\cite{bauer-hennicker-wirsing:tcs:2011} for interface theories with
data) would become relevant when applying a parallel composition
constructor.


\bibliographystyle{splncs04}
\bibliography{bibliography}

\clearpage
\begin{appendix}

\section{Proofs}\label{app:proofs}

In order to prove \cref{thm:bisim-invariance} we have to consider the
following result:

\begin{lemma}\label{lem:bisim-invariance}
Let $M_1, M_2$ be $\Sigma$-edts and $B \subseteq \Conf(M_1)
\times \Conf(M_2)$ a bisimulation between $M_1$ and $M_2$, let $\varrho$
be a hybrid-free sentence, and $(\gamma_1, \gamma_2) \in B$.  Then $M_1,
\gamma_1 \EDHLmodels{\Sigma} \varrho$ iff $M_2, \gamma_2
\EDHLmodels{\Sigma} \varrho$.
\end{lemma}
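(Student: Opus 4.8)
The plan is to prove the statement by structural induction on the hybrid-free sentence $\varrho$, the only non-routine case being the diamond modality $\dldia{\lambda}\varrho'$. Its treatment needs an auxiliary observation: although the bisimulation conditions \cref{it:bisim-zig} and \cref{it:bisim-zag} are stated only for atomic actions $\lab{e}{\psi}$, they transfer to arbitrary structured actions. Precisely, I would first establish a \textbf{transfer lemma}: for every $\lambda \in \Lab(\Sigma)$ and all $(\gamma_1, \gamma_2) \in B$, (i) if $(\gamma_1, \gamma_1') \in \Rel(M_1)_\lambda$ then there is $\gamma_2' \in \Conf(M_2)$ with $(\gamma_2, \gamma_2') \in \Rel(M_2)_\lambda$ and $(\gamma_1', \gamma_2') \in B$, and (ii) symmetrically with the roles of $M_1$ and $M_2$ exchanged. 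Since $B^{-1}$ is again a bisimulation (it interchanges \cref{it:bisim-zig} and \cref{it:bisim-zag} and \cref{it:bisim-atom} is symmetric), it suffices to prove (i); then (ii) follows by applying (i) to $B^{-1}$.

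Part (i) I would prove by induction on the structure of $\lambda$. The base case $\lambda = \lab{e}{\psi}$ is exactly condition \cref{it:bisim-zig}. For $\lambda_1 + \lambda_2$, any $\lambda$-step is a $\lambda_i$-step for some $i$, so the induction hypothesis applies directly. For $\lambda_1 ; \lambda_2$, a $\lambda$-step from $\gamma_1$ factors through some intermediate $\gamma_1''$ with $(\gamma_1, \gamma_1'') \in \Rel(M_1)_{\lambda_1}$ and $(\gamma_1'', \gamma_1') \in \Rel(M_1)_{\lambda_2}$; applying the induction hypothesis first to $\lambda_1$ (from $(\gamma_1, \gamma_2) \in B$) and then to $\lambda_2$ (from the newly obtained pair in $B$) yields the matching $\lambda$-step in $M_2$. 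For $\lambda_1^*$, a $\lambda$-step from $\gamma_1$ to $\gamma_1'$ is a chain of $n \geq 0$ consecutive $\lambda_1$-steps, and I would run an inner induction on $n$, using the already established $\lambda_1$-case at each link to propagate membership in $B$ and to assemble a matching $n$-step chain in $M_2$. This iteration case is the step I expect to be the main obstacle, since it is the only place where one reasons about paths of unbounded length rather than a single transition; everything else is a one-step argument.

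With the transfer lemma in hand, the main induction on $\varrho$ is routine. For $\varrho = \varphi \in \stFrm(\Sigma)$, condition \cref{it:bisim-atom} gives $\datast(\gamma_1) \datamodels{\Attr(\Sigma)} \varphi$ iff $\datast(\gamma_2) \datamodels{\Attr(\Sigma)} \varphi$, which is the claim by the definition of satisfaction. The case $\truefrm$ is trivial, and $\neg\varrho'$ and $\varrho_1 \lor \varrho_2$ follow immediately from the induction hypotheses applied to the same pair $(\gamma_1, \gamma_2) \in B$. For $\varrho = \dldia{\lambda}\varrho'$: if $M_1, \gamma_1 \EDHLmodels{\Sigma} \dldia{\lambda}\varrho'$, choose $\gamma_1'$ with $(\gamma_1, \gamma_1') \in \Rel(M_1)_\lambda$ and $M_1, \gamma_1' \EDHLmodels{\Sigma} \varrho'$; part (i) of the transfer lemma produces $\gamma_2'$ with $(\gamma_2, \gamma_2') \in \Rel(M_2)_\lambda$ and $(\gamma_1', \gamma_2') \in B$, and the induction hypothesis gives $M_2, \gamma_2' \EDHLmodels{\Sigma} \varrho'$, hence $M_2, \gamma_2 \EDHLmodels{\Sigma} \dldia{\lambda}\varrho'$; the converse direction uses part (ii). Since $\dlbox{\lambda}\varrho'$ is an abbreviation for $\neg\dldia{\lambda}\neg\varrho'$, box modalities are already covered by the negation and diamond cases, so the induction is complete.

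Finally, I would note that no valuation enters anywhere because hybrid-free formulae contain neither control state variables nor binders, so satisfaction is well-defined on configurations alone; this is precisely why the hybrid-free fragment, in contrast to full $\EDHL$, is bisimulation invariant, and it is what lets the plain structural induction go through. \cref{thm:bisim-invariance} then follows by applying this lemma to each initial configuration, matched up via condition \cref{it:bisim-init}.
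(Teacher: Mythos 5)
Your proposal is correct and follows essentially the same route as the paper: a structural induction on hybrid-free sentences whose only substantive case, the diamond modality, is handled by lifting the \cref{it:bisim-zig}/\cref{it:bisim-zag} conditions from atomic to structured actions. The paper discharges that lifting by citing the standard fact that the dynamic-logic action constructors are safe for bisimulation, whereas you prove it explicitly by induction on the action (with an inner induction on chain length for iteration) --- a harmless elaboration of the same argument.
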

\begin{proof}
We proceed by induction over the structure of hybrid-free sentences.
For base sentences $\varphi$, by definition $M_1, \gamma_1
\EDHLmodels{\Sigma} \varphi$ iff $\omega(\gamma_1)\datamodels{A(\Sigma)}
\varphi$. Since $(\gamma_1, \gamma_2) \in B$, we have by
\cref{it:bisim-atom} that $\omega(\gamma_2) \datamodels{A(\Sigma)}
\varphi$, i.e., $M_2, \gamma_2 \EDHLmodels{\Sigma} \varphi$.  For
sentences $\dldia{\lambda} \varrho$ we use a well-known result: dynamic
logic constructors are \emph{safe for bisimulation} (see
\cite{VanBenthem1998}), i.e., the \cref{it:bisim-zig} and
\cref{it:bisim-zag} properties of $B$ are preserved from atomic actions
to composed $\Sigma$-ed actions (provable by induction on the structure
of $\Sigma$-ed actions).  Hence, $M_1, \gamma_1 \EDHLmodels{\Sigma}
\dldia{\lambda}\varrho$ iff $M_1, \gamma_1' \EDHLmodels{\Sigma} \varrho$
for some $(\gamma_1, \gamma_1') \in \Rel(M_1)_{\lambda}$. Moreover, since
$(\gamma_1, \gamma_1') \in B$, \cref{it:bisim-zig} ensures the existence
of a $\gamma_2'$ such that $(\gamma_2, \gamma_2') \in \Rel(M_2)_{\lambda}$
and $(\gamma_2, \gamma_2') \in B$.  By induction hypothesis, $M_2,
\gamma_2' \EDHLmodels{\Sigma} \varrho$ and hence, $M_2, \gamma_2
\EDHLmodels{\Sigma} \dldia{\lambda}\varrho$. The converse implication is
analogously proved using the \cref{it:bisim-zag} property.  The proof
for the remaining cases is straightforward.
\end{proof}

\begin{proof}[of \cref{thm:bisim-invariance}]
Since $M_1\bisim M_2$, there is a bisimulation $B \subseteq \Conf(M_1)
\times \Conf(M_2)$ that relates $\Conf_0(M_1)$ and $\Conf_0(M_2)$
according to the \cref{it:bisim-init} property. By supposing $M_1
\EDHLmodels{\Sigma} \varrho$ and $\gamma_2 \in \Conf_0(M_2)$, we have by
\cref{it:bisim-init} that there is a $\gamma_1 \in \Conf_0(M_1)$ such
that $(\gamma_1, \gamma_2)\in B$. By \cref{lem:bisim-invariance}, $M_2,
\gamma_2 \EDHLmodels{\Sigma} \varrho$ and therefore $M_2
\EDHLmodels{\Sigma} \varrho$. The converse direction can be proved
symmetrically.
\end{proof}

\begin{proof}[of \cref{cor:bisim-closed}]
Let $\Spec$ be an axiomatic specification and $M \in \Mod(\Spec)$. By
the definition of $\Mod(\Spec)$, $M \EDHLmodels{Sig(\Spec)}
\Ax(\Spec)$. By \cref{lem:bisim-invariance}, for any $M' \in
\EDHLStr(\Sig(\Spec))$, if $M \bisim M'$ then $M'
\EDHLmodels{Sig(\Spec)} \Ax(\Spec)$, i.e., $M' \in \Mod(\Spec)$.
\end{proof}

\begin{proof}[of \cref{bisinvariance2}]
Let $M_1, \gamma_1 \EDHLmodels{\Sigma} \varrho$ iff $M_2, \gamma_2
\EDHLmodels{\Sigma} \varrho$ hold for all hybrid-free sentences
$\varrho$.  We show the existence of a bisimulation relating $\gamma_1$
and $\gamma_2$.  Consider the relation $B = \{ (\gamma_1,\gamma_2) \mid
M_1,\gamma_1 \EDHLmodels{\Sigma} \varrho \text{ iff } M_2, \gamma_2
\EDHLmodels{\Sigma} \varrho \text{ for any hybrid-free $\Sigma$-ed
  sentence $\varrho$} \}$. Property \cref{it:bisim-atom} holds by
assumption. In order to prove \cref{it:bisim-zig}, assume that there is
a $\gamma'_1\in \Conf(M_1)$ with $(\gamma_1,\gamma_1')\in
\Rel(M_1)_{\lab{e}{\psi}}$, for which, there is no $\gamma_2'\in
\Conf(M_2)$ such that $(\gamma_2,\gamma_2')\in
\Rel(M_2)_{\lab{e}{\psi}}$ and $(\gamma_1',\gamma_2')\in B$.  Then, for
any configuration $\gamma_2' \in \Gamma_2^{\lab{e}{\psi}} = \{
\gamma_2^{\lab{e}{\psi}} \mid (\gamma_2,\allowbreak
\gamma_2^{\lab{e}{\psi}}) \in \Rel(M)_{\lab{e}{\psi}} \}$,
there is a hybrid-free sentence $\varrho_{\gamma_2'}$ such that $M_1,
\gamma'_1 \EDHLmodels{\Sigma} \varrho_{\gamma_2'}$ and $M_2, \gamma_2'
\not\EDHLmodels{\Sigma} \varrho_{\gamma_2'}$.  Since
$\Gamma_2^{\lab{e}{\psi}} \subseteq \{ \gamma_2^e \mid (\gamma_2,
\gamma_2^e) \in \Rel(M_2)_e \}$ is finite by the image-finiteness of
$M_2$, $\varrho = \bigwedge_{\gamma_2' \in \Gamma_2^{\lab{e}{\psi}}}
\varrho_{\gamma_2'}$ is a hybrid-free sentence.  Hence,
$M_1,\gamma_1\EDHLmodels{\Sigma} \dldia{\lab{e}{\psi}} \varrho$ and
$M_2,\gamma_2\not\EDHLmodels{\Sigma} \dldia{\lab{e}{\psi}} \varrho$ what
contradicts the hypothesis that $(\gamma_1,\gamma_2)\in B$. Therefore,
there is no $\gamma_1'$ is these conditions, i.e., \cref{it:bisim-zig}
holds. The \cref{it:bisim-zag} condition is shown in a similar way. It
is proved that $B$ is a bisimulation.
\end{proof}

\begin{proof}[of \cref{prop:op-ed-spec-parallel}]
Let $O_i = (\Sigma_i, C_i, T_i, (c_{i, 0}, \varphi_{i, 0}))$ with
$\Sigma_i = (E_i, A_i)$ for $i \in \{ 1, 2 \}$ such that $A_1 \cap A_2 =
\emptyset$, let $\Sigma = \Sigma_1 \otimes \Sigma_2 = (E_1 \cup E_2, A_1
\cup A_2) = (E, A)$, and let $O = O_1 \parallel O_2 = (\Sigma, C, T, (c_0,
\varphi_0))$.  Let $M_i \in \Mod(O_i)$ for $i \in \{ 1, 2 \}$ and $M =
M_1 \otimes M_2$; we prove that $M \in \Mod(O)$.

\smallskip%
We first show that for all $((c_1, c_2), \varphi, e, \psi, (c_1', c_2'))
\in \Trans(O)$ and $\omega \in \DataSt(A)$ with $\omega \datamodels{A}
\varphi$, there is some $(\gamma, \gamma') \in \Rel(M)_{e_1}$ with
$(\datast(\gamma), \datast(\gamma')) \datamodels{A} \psi$ by induction
over the reachability of $\CtrlSt(O)$: Let $(c_1, c_2) \in \CtrlSt(O)$
with $((c_{1, i}, c_{2, i}), \varphi_{i+1}, e_{i+1},
\psi_{i+1},\allowbreak (c_{1, i+1},\allowbreak c_{2, i+1})) \in
\Trans(O)$ and $\omega_{i+1} \datamodels{A} \varphi_{i+1}$ with
$(\gamma_i, \gamma_{i+1}) \in \Rel(M)_{e_{i+1}}$ such that
$(\omega(\gamma_i), \omega(\gamma_{i+1})) \datamodels{A} \psi_{i+1}$.
Let $((c_1, c_2), \varphi, e, \psi, (c_1', c_2')) \in \Trans(O)$ and
$\omega \in \DataSt(A)$ with $\omega \datamodels{A} \varphi$ be given.

\noindent%
$e \in E_1 \setminus E_2$: Then $(c_1, \varphi, e, \psi_1, c_1') \in
\Trans(O_1)$ with $\psi = \psi_1 \land \id{A_2}$, and
$\restrict{\omega}{A_1} \datamodels{A_1} \varphi$.  As $M_1 \in
\Mod(O_1)$, there is a $(\gamma_1, \gamma_1') \in \Rel(M_1)_e$ such that
$(\datast(\gamma_1), \datast(\gamma_1')) \datamodels{A_1} \psi_1$.  Let
$\gamma = ((c_1, c_2), \datast(\gamma_1) + \restrict{\omega}{A_2})$ and
$\gamma' = ((c_1', c_2), \datast(\gamma_1') + \restrict{\omega}{A_2})$;
then $(\datast(\gamma), \datast(\gamma') \datamodels{A} \psi$.  By
induction hypothesis, $\gamma \in \Conf(M)$ and hence $(\gamma, \gamma')
\in \Rel(M)_e$.

\noindent%
$e \in E_2 \setminus E_1$: Symmetric to $e \in E_1 \setminus E_2$.

\noindent%
$e \in E_1 \cap E_2$: Then $(c_1, \varphi_1, e, \psi_1, c_1') \in
\Trans(O_1)$, $(c_2, \varphi_2, e, \psi_2, c_2') \in \Trans(O_2)$ with
$\varphi = \varphi_1 \land \varphi_2$, $\psi = \psi_1 \land \psi_2$,
$\restrict{\omega}{A_1} \datamodels{A_1} \varphi_1$,
$\restrict{\omega}{A_2} \datamodels{A_2} \varphi_2$.  Since $M_i \in
\Mod(O_i)$, there are $(\gamma_i, \gamma_i') \in \Rel(M_i)_e$ such that
$(\datast(\gamma_i), \datast(\gamma_i')) \datamodels{A_i} \psi_i$ for $i
\in \{ 1, 2 \}$.  Let $\gamma = ((c_1, c_2), \datast(\gamma_1) +
\datast(\gamma_2))$ and $\gamma' = ((c_1', c_2'), \datast(\gamma_1') +
\datast(\gamma_2'))$; then $(\datast(\gamma), \datast(\gamma')
\datamodels{A} \psi$.  By induction hypothesis, $\gamma \in \Conf(M)$
and hence $(\gamma, \gamma') \in \Rel(M)_e$.

\smallskip%
We now show that for all $e \in E$ and $(((c_1, c_2), \omega), ((c_1',
c_2'), \omega')) \in \Rel(M)_e$ there is some $((c_1, c_2), \varphi, e,
\psi, (c_1', c_2')) \in \Trans(O)$ with $\omega \datamodels{A} \varphi$ and
$(\omega, \omega') \datamodels{A} \psi$ by induction over the
reachability of $\Conf(M)$: Let $((c_1, c_2), \omega) \in \Conf(M)$ with
$(((c_{1, i}, c_{2, i}), \omega_i), ((c_{1, i+1},\allowbreak c_{2,
  i+1}),\allowbreak \omega_{i+1})) \in \Rel(M)_{e_{i+1}}$ such that there
are $((c_{1, i}, c_{2, i}), \varphi_{i+1}, e_{i+1},
\psi_{i+1},\allowbreak (c_{1, i+1},\allowbreak c_{2, i+1})) \in T(O)$
with $\omega_i \datamodels{A} \varphi_i$ and $(\omega_i, \omega_{i+1})
\datamodels{A} \psi_{i+1}$ for all $0 \leq i < n$ and $((c_{1,
  n},\allowbreak c_{2, n}),\allowbreak \omega_n) = ((c_1, c_2),
\omega)$.  Let $(((c_1, c_2), \omega), ((c_1', c_2'), \omega')) \in
\Rel(M)_e$.

\noindent%
$e \in E_1 \setminus E_2$: Then $((c_1, \restrict{\omega}{A_1}), (c_1',
\restrict{\omega}{A_1})) \in \Rel(M_1)_e$ and $c_2' = c_2$.  Since $M_1 \in
\Mod(O_1)$, there is some $(c_1, \varphi_1, e, \psi_1, c_1') \in \Trans(O_1)$
with $\restrict{\omega}{A_1} \datamodels{A_1} \varphi_1$ and
$(\restrict{\omega}{A_1},\allowbreak \restrict{\omega'}{A_1})
\datamodels{A_1} \psi_1$.  By induction hypothesis, $(c_1, c_2) \in
\CtrlSt(O)$ and hence $((c_1, c_2), \varphi_1, e,\allowbreak \psi_1 \land
\id{A_2}, (c_1', c_2)) \in \Trans(O)$, where $\omega \datamodels{A}
\varphi_1$ and $(\omega, \omega') \datamodels{A} \psi_1 \land \id{A_2}$.

\noindent%
$e \in E_2 \setminus E_1$: Symmetric to $e \in E_1 \setminus E_2$.

\noindent%
$e \in E_1 \cap E_2$: Then $((c_1, \restrict{\omega}{A_1}), (c_1',
\restrict{\omega}{A_1})) \in \Rel(M_1)_e$ and $((c_2,
\restrict{\omega}{A_2}), (c_2',\allowbreak \restrict{\omega}{A_2})) \in
\Rel(M_2)_e$.  Since $M_i \in \Mod(O_i)$, there are some $(c_i,
\varphi_i, e, \psi_i, c_i') \in \Trans(O_i)$ such that
$\restrict{\omega}{A_i} \datamodels{A_i} \varphi_i$ and
$(\restrict{\omega}{A_i}, \restrict{\omega'}{A_i}) \datamodels{A_i}
\psi_i$ for $i \in \{ 1, 2 \}$.  By induction hypothesis, $(c_1, c_2)
\in \CtrlSt(O)$ and hence $((c_1, c_2), \varphi_1 \land \varphi_2, e,
\psi_1 \land \psi_2, (c_1', c_2')) \in \Trans(O)$, where $\omega
\datamodels{A} \varphi_1 \land \varphi_2$ and $(\omega, \omega')
\datamodels{A} \psi_1 \land \psi_2$.
\end{proof}

\begin{proof}[of \cref{thm:par}]
Let $\Spec \refinesto_{\kappa} O_1 \parallel O_2$ hold, i.e., $\kappa(M)
\in \Mod(\Spec)$ for all $M \in \Mod(O_1 \parallel O_2)$.  Let $M_1 \in
\Mod(O_1)$ and $M_2 \in \Mod(O_2)$.  Then $\kappa_{\otimes}(M_1, M_2) =
M_1 \otimes M_2 \in \Mod(O_1 \parallel O_2)$ by
\cref{prop:op-ed-spec-parallel}, that is, $\kappa(\kappa_{\otimes}(M_1,
M_2)) \in \Mod(\Spec)$, and thus $\Spec \refinesto_{\kappa_{\otimes};
  \kappa} O_1 \parallel O_2$.
\end{proof}


\end{appendix}

\end{document}
